\documentclass[a4paper, 11pt]{article}
\usepackage{fullpage}
\usepackage{booktabs}
\usepackage{amsthm}
\usepackage{thmtools,thm-restate}
\usepackage{microtype}
\usepackage{hyperref}
\usepackage{mathtools,amsmath,xspace,xcolor}

\usepackage{amssymb}
\newtheorem{theorem}{Theorem}
\newtheorem{lemma}[theorem]{Lemma}

\newtheorem{observation}[theorem]{Observation}

 \let\leq\leqslant
\let\ge\geqslant \let\le\leqslant

\newcommand{\arr}{\ensuremath{\mathcal{A}}}
\newcommand{\arrp}{\ensuremath{\mathcal{A}^+}}
\newcommand{\ilev}[1]{\ensuremath{\mathsf{lev}_{#1}}}

\newcommand{\lev}[1]{\ensuremath{\mathsf{lev}(#1)}}
\newcommand{\twin}[1]{\ensuremath{\mathsf{tw}(#1)}}
\newcommand{\cl}{\ensuremath{\ell_\mathsf{cap}}}

\newcommand{\conv}[1]{\ensuremath{\mathsf{CH}(#1)}}

\title{Bichromatic Line-Centers for Point Pairs\thanks{J. Lee, Y. Bae, H.-K. Ahn are partly supported by the National Research Foundation of Korea (NRF) grant funded by the Korea government(MSIT) (RS-2023-00219980), and the Institute of Information \& communications Technology Planning \& Evaluation (IITP) grant funded by the Korea government (MSIT) (No.RS-2019-II191906, Artificial Intelligence Graduate School Program (POSTECH)) and (No. 2017-0-00905, Software Star Lab (Optimal Data Structure and Algorithmic Applications in Dynamic Geometric Environment)).}}
\author{Jaegun Lee\thanks{These authors contributed equally.}~\thanks{Department of Computer Science and Engineering, Pohang University of Science and Technology, Pohang, Republic of Korea. 
\texttt{jagunlee@postech.ac.kr}}
 \and Youjung Bae\footnotemark[1]~\thanks{Graduate School of Artificial Intelligence, Pohang University of Science and Technology, Pohang, Republic of Korea. \texttt{youjungbae@postech.ac.kr}} \and Taehoon Ahn\thanks{Department of Computer Science, Sookmyung Women's University, Seoul, Republic of Korea. \texttt{taehoon@sookmyung.ac.kr} } \and Sang Won Bae\thanks{Division of AI Computer Science and Engineering, Kyonggi University, Suwon, Republic of Korea. \texttt{swbae@kgu.ac.kr} } \and Hee-Kap Ahn\thanks{Department of Computer Science and Engineering, Graduate School of Artificial Intelligence, Pohang University of Science and Technology, Pohang, Republic of Korea. \texttt{heekap@postech.ac.kr}}}

\begin{document}
\date{}
\maketitle

\begin{abstract}
We study the \emph{bichromatic line-center problem} for $n$ pairs of points in the plane.
A feasible solution assigns one point from each pair to the red set $R$ and the other to the blue set $B$.
The goal is to minimize $\max\{w^\circ(R),\,w^\circ(B)\}$, where $w^\circ(X)$ denotes the minimum width of a strip enclosing $X$; the midlines of the corresponding optimal strips define the line-centers of $R$ and $B$.

We consider several variants induced by orientational constraints on line-centers and provide efficient algorithms for each.
For one line-center, which consists of computing a minimum-width strip that contains at least one point from each pair, we give an $O(n^2)$-time algorithm.
For two line-centers, we obtain an $O(n)$-time algorithm when both are horizontal, and $\Theta(n\log n)$-time algorithms 
when the two centers are parallel or when both orientations are prescribed.
When exactly one orientation is prescribed, we give an $O(n^2)$-time algorithm.
Finally, for the unrestricted case, we present an $O(n^3\log n)$-time algorithm.
\end{abstract}

\section{Introduction}
A \emph{strip} $\sigma$ is the closed region of the plane between two parallel lines, with 
\emph{width} $w(\sigma)$ defined as the perpendicular distance between the two bounding lines.
For a point set $X\subset \mathbb{R}^2$, define the \emph{optimal strip width}
$w^\circ(X)=\min_{\sigma:\, X\subseteq \sigma} w(\sigma)$,
where $\sigma$ may have any orientation.
Let $\sigma^\circ(X)$ denote a minimum-width strip covering $X$.
The \emph{midline} of a strip is the line parallel to its bounding lines and equidistant from them.
The midline of $\sigma^\circ(X)$ is called a \emph{line-center} of $X$.
We refer to the problem of finding the line-center as the \emph{line-center problem}.

For a set $S$ of unordered pairs of points in $\mathbb{R}^2$, a~\emph{bichromatic assignment} 
(or simply an \emph{assignment}) selects exactly one point from each pair to be red and colors the other blue,
yielding the red and blue point sets $R$ and $B$, respectively.
Our objective is to find a bichromatic assignment that minimizes
$\max\{w^\circ(R),\, w^\circ(B)\}$.
We call any bichromatic assignment achieving the minimum an \emph{optimal assignment} for $S$.
The line-centers induced by $\sigma^\circ(R)$ and $\sigma^\circ(B)$ provide representative central axes for $R$ and $B$, respectively.
We refer to this optimization problem as the \emph{bichromatic line-center problem}.

The bichromatic line-center problem is motivated by geometric optimization under \emph{pairwise selection} constraints: for each input pair, we choose which point is colored red (and the other is colored blue).
This models ambiguous, duplicated, or coupled data where we still seek a concise geometric summary per class. Minimum-width strips offer a robust, orientation-free notion of dominant direction and dispersion, and minimizing $\max\{w^\circ(R),\, w^\circ(B)\}$ yields a balanced objective. Algorithmically, the problem is challenging because the assignment interacts nontrivially with the continuous choice of strip orientations.

Applications include extracting two representative axes from paired or uncertain planar observations. In trajectory and shape analysis, pairs may arise from association ambiguity or sensor uncertainty~\cite{Kirubarajan2004}; selecting one point per pair enables two compact directional summaries via 
line-centers. In clustering and visualization, the objective produces two strip-based linear prototypes of bounded thickness, supporting interpretable comparison of directional trends~\cite{Agarwal2003,Qiao2022}. Similar formulations capture assignment tasks that split pairs across two groups while keeping each group geometrically coherent, e.g., bipartite labeling of matched features or paired annotations in image analysis.

\subsection{Previous work}
There has been substantial work on the (monochromatic) line-center problem for point sets in the plane.
Toussaint~\cite{Toussaint1983} gave a $\Theta(n\log n)$-time algorithm for one line-center of $n$ points.
For two line-centers, Jaromczyk and Kowaluk~\cite{Jaromczyk1995} presented an $O(n^2\log^2 n)$-time algorithm.
See~\cite{Chung2026} for additional approximation results.

Several restricted variants have also been studied.
Chung et al.~\cite{Chung2026} gave a $\Theta(n)$-time algorithm for two horizontal line-centers, and Bae~\cite{Bae2020} gave an $O(n^2)$-time algorithm for two parallel line-centers.
Chung et al.~\cite{Chung2026} also obtained an $O(n\log n)$-time algorithm for the mixed variant (one unrestricted, one horizontal).
Ahn and Bae~\cite{Ahn2026} developed an $O(n\log n)$-time algorithm when the orientations are prescribed, and an $O(n^2\alpha(n)\log n)$-time algorithm when the intersection angle is given, where $\alpha(n)$ is the inverse Ackermann function.
Ahn et al.~\cite{Ahn2025} gave an $O(n^2\log n)$-time algorithm for the special case of a fixed right angle.

Fewer results are known for bichromatic point-centers defined on pairs of points, 
which consists of finding a bichromatic assignment that minimizes the maximum radius of two minimum disks covering $R$ and $B$, respectively.
Arkin et al.~\cite{Arkin2015} studied the bichromatic $L_2$- and $L_\infty$-point-center problems for $n$ pairs of points, giving an $O(n^3\log^2 n)$-time algorithm and an $O(n)$-time algorithm, respectively.
They also gave $(1+\varepsilon)$-approximation algorithms
for the bichromatic $L_2$-point-center problem.
Wang and Xue~\cite{Wang2022} improved the running time for the bichromatic $L_2$-point-center problem to $O(n^2\log^2 n)$, and improved the approximation algorithm to $O\!\left(n+\frac{1}{\varepsilon^{3}}\log^{2}\frac{1}{\varepsilon}\right)$.

Arkin et al.~studied the one-point-center problem for point pairs. They gave an $O(n^2\log n)$-time algorithm for the smallest enclosing disk and an $O(n\log^2 n)$-time algorithm for the square case, with an $\Omega(n\log n)$ lower bound in the algebraic decision-tree model.
Using the approach of~\cite{Khanteimouri2017}, one can obtain a $\Theta(n\log n)$-time algorithm for the square case.

\subsection{Our Results}
We study six variants of the bichromatic line-center problem for $n$ unordered pairs of points in the plane.
An overview of the variants and running times is given in Table~\ref{tab:variants}, and examples are shown in Figure~\ref{fig:ex}.
In the one line-center variant (\textsf{1U}), we compute a minimum-width strip of arbitrary orientation that contains at least 
one point from each input pair.
\begin{table}[ht]
\centering
\caption{Six variants of the bichromatic line-center problem for $n$ point pairs in the plane.}
\small
\setlength{\tabcolsep}{6pt}
\begin{tabular}{@{}lll@{}}
\toprule
 & Variant & Time \\
\midrule
\textsf{~1U} & One line-center, unrestricted (Section~\ref{sec:1U})
  & $O(n^2)$\\
\textsf{~2H} & Two horizontal line-centers (Section~\ref{sec:2P})
  & $\Theta(n)$\\
\textsf{~2P} & Two parallel line-centers (Section~\ref{sec:2P})
  & $\Theta(n\log n)$\\
$\mathsf{~2O_2}$ & Two line-centers of prescribed orientations (Section~\ref{sec:2O2})
  & $\Theta(n\log n)$\\
$\mathsf{~2O_1}$ & Two line-centers with one prescribed orientation (Section~\ref{sec:2O1})
  & $O(n^2)$\\
\textsf{~2U} & Two line-centers, unrestricted (Section~\ref{sec:2U})
  & $O(n^3\log n)$\\
\bottomrule
\end{tabular}
\label{tab:variants}
\end{table}

\begin{figure}[t]%
  \centering
\includegraphics[width=\textwidth]{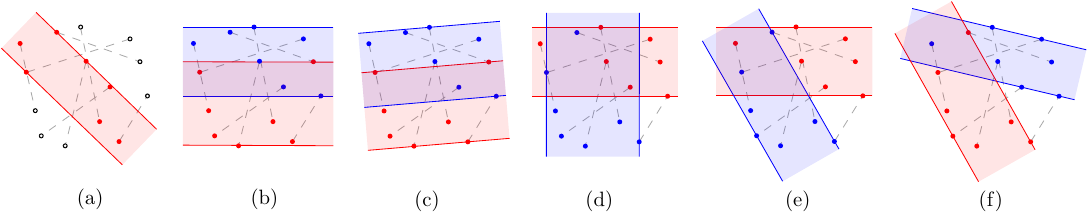}%
  \caption{Optimal assignments for point pairs (marked by dashed segments).
  (a) \textsf{1U}.
  (b) \textsf{2H}.
  (c) \textsf{2P}.
  (d) $\mathsf{2O_2}$ with one horizontal
  and one vertical strips.
  (e) $\mathsf{2O_1}$ with one horizontal strip.
  (f) \textsf{2U}.}
  \label{fig:ex}%
\end{figure}

\section{Preliminaries}
Let $S=\{(p_1,q_1),\ldots,(p_n,q_n)\}$ be a set of $n$ unordered pairs of points in $\mathbb{R}^2$.
For a pair $(p_i,q_i)\in S$, we call $p_i$ and $q_i$ \emph{twins}.
Let $P(S)$ denote the set of points appearing in the pairs of $S$.
For a pair of strips $(\sigma, \sigma')$, define its width as $\max\{w(\sigma), w(\sigma')\}$.
We say that the pair $(\sigma, \sigma')$ is \emph{feasible} with respect to $S$ if
there exists an assignment $(R,B)$ for $S$ such that $R\subset\sigma$ and $B\subset\sigma'$.
When $S$ is clear from the context, we simply say that $(\sigma,\sigma')$ is feasible.
We also say that a strip $\sigma$ is \emph{feasible} if it contains at least one point from each pair in $S$.

\begin{lemma}\label{lem:feasible}
  A pair $(\sigma, \sigma')$ is feasible iff both $\sigma$ and $\sigma'$ are feasible and $P(S)\subset\sigma\cup\sigma'$.
\end{lemma}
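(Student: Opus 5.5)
We prove the two directions of Lemma~\ref{lem:feasible} separately, working directly from the definitions; no earlier result is needed.

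\emph{($\Rightarrow$)} Assume $(\sigma,\sigma')$ is feasible, and fix an assignment $(R,B)$ with $R\subset\sigma$ and $B\subset\sigma'$. For each pair $(p_i,q_i)$, one point lies in $R$ and the other in $B$. Hence $\sigma$ contains a point of every pair, so $\sigma$ is feasible; the same argument shows $\sigma'$ is feasible. Moreover $P(S)=R\cup B\subset\sigma\cup\sigma'$.

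\emph{($\Leftarrow$)} Assume $\sigma$ and $\sigma'$ are feasible and $P(S)\subset\sigma\cup\sigma'$. We build the assignment one pair at a time. The only real content is the case analysis for a single pair $(p_i,q_i)$.
\begin{itemize}
\item If $p_i\in\sigma$ and $q_i\in\sigma'$, color $p_i$ red and $q_i$ blue.
\item Symmetrically, if $q_i\in\sigma$ and $p_i\in\sigma'$, color $q_i$ red and $p_i$ blue.
\end{itemize}
Suppose neither case applies. Then the conjunction ($p_i\in\sigma$ and $q_i\in\sigma'$) fails, and so does ($q_i\in\sigma$ and $p_i\in\sigma'$). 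We derive a contradiction. Since $\sigma$ is feasible, some point of the pair lies in $\sigma$; by symmetry say $p_i\in\sigma$. Because the first case fails, $q_i\notin\sigma'$. Then $P(S)\subset\sigma\cup\sigma'$ forces $q_i\in\sigma$. Now use the feasibility of $\sigma'$: one of $p_i,q_i$ lies in $\sigma'$, and since $q_i\notin\sigma'$ it must be $p_i\in\sigma'$. Together with $q_i\in\sigma$, this is exactly the second case, contradicting our assumption. So every pair falls under one of the two cases.

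Doing this for every pair gives an assignment $(R,B)$ with $R\subset\sigma$ and $B\subset\sigma'$, so $(\sigma,\sigma')$ is feasible. The small case analysis in the previous paragraph is the only step requiring care; everything else is immediate from the definitions.
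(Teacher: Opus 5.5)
Your proof is correct and takes essentially the same approach as the paper: you build the assignment pair by pair, using the feasibility of both strips and $P(S)\subset\sigma\cup\sigma'$ to show each pair can be split with one point in $\sigma$ and its twin in $\sigma'$. The only difference is organization. The paper splits on whether both twins lie in $\sigma\cap\sigma'$ and otherwise takes a point in $\sigma\setminus\sigma'$ directly, whereas you reach the same conclusion by contradiction.
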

\begin{proof}
  The ``Only if'' direction is immediate.   For the other direction, 
  assume $\sigma$ and $\sigma'$ are feasible and $P(S)\subset\sigma\cup\sigma'$.
  For each pair $(p,q)\in S$, if both $p$ and $q$ are contained in $\sigma\cap\sigma'$, assign the two points arbitrarily to different colors.
  Otherwise, assume without loss of generality that $p\in\sigma\setminus\sigma'$; then the feasibility of $\sigma'$ implies 
  $q\in\sigma'$. Assign $p$ red and $q$ blue.
  This yields $(R,B)$ with $R\subset\sigma$ and $B\subset\sigma'$.
  Thus $(\sigma, \sigma')$ is feasible.
\end{proof}

We use the standard Cartesian coordinate system of $\mathbb{R}^2$.
For a point $p$, let $x(p)$ and $y(p)$ be its $x$- and $y$-coordinates, respectively.
A line is \emph{vertical} (resp.,~\emph{horizontal}) if it is parallel to the $y$-axis (resp.,~$x$-axis).
The orientation of a line is $\theta\in[0,\pi)$, measured counterclockwise from the $y$-axis.
We may restrict our attention to strips for which at least one input point lies on each bounding line; these points are called the \emph{defining points} of the strip.

\subparagraph{Duality.}
We use the point--line duality: a point $p=(a,b)$ maps to the line $p^*: y=ax-b$, and a non-vertical line $\ell: y=ax-b$ maps to the point $\ell^*=(a,b)$.
Let $S^*$ denote the set obtained by dualizing each point in every pair of $S$; thus, $S^*$ is a set of unordered line-pairs. For a pair $(p^*_i,q^*_i)\in S^*$,
we call $p^*_i$ and $q^*_i$ \emph{twins}.
A strip $\sigma$ bounded by $y=ax-b_1$ and
$y=ax-b_2$ with $b_1\le b_2$
corresponds to the vertical line segment $\sigma^*=\{(a,b)\mid b\in[b_1,b_2]\}$ in the dual plane.
We call a dual segment $\sigma^*$ \emph{feasible} if it intersects at least one line from each line-pair in $S^*$. By Observation~\ref{obs:dual_strip}, $\sigma^*$ is feasible iff $\sigma$ is feasible.

\begin{observation}\label{obs:dual_strip}
  A strip $\sigma$ contains a point $p$ iff $p^*$ intersects $\sigma^*$. 
\end{observation}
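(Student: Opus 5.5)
The plan is a direct computation with the duality map. No earlier result is needed. I only have to check that the dual relation matches the inequalities defining $\sigma$.

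Let $p=(x_0,y_0)$. The strip $\sigma$ is bounded by $y=ax-b_1$ and $y=ax-b_2$ with $b_1\le b_2$, so it is the set of points $(x,y)$ with $ax-b_2\le y\le ax-b_1$. This is non-vertical, as the duality assumes. Equivalently,
\[
  b_1\le ax-y\le b_2 .
\]
Thus $p\in\sigma$ iff $ax_0-y_0\in[b_1,b_2]$.

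Next, $\sigma^*$ lies on the vertical line $\{X=a\}$ of the dual plane. The dual line $p^*\colon Y=x_0X-y_0$ is non-vertical, so it meets that vertical line in exactly one point, namely $(a,\,ax_0-y_0)$. Hence $p^*$ intersects $\sigma^*=\{(a,b)\mid b\in[b_1,b_2]\}$ iff $ax_0-y_0\in[b_1,b_2]$.

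This is the same condition as before, so both directions follow at once. The only step needing care is the sign convention: I would check that the choice $\ell\colon y=ax-b \mapsto (a,b)$ really makes $b_1\le b_2$ correspond to the orientation of the inequalities above. I would also note that vertical strips are excluded, since duality is defined only for non-vertical lines.
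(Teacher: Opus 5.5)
Your proposal is correct: both membership conditions reduce to $ax_0-y_0\in[b_1,b_2]$. This direct check is exactly what the paper leaves implicit, since it states the observation without proof as an immediate consequence of the duality map, and the sign-convention worry you flag is already settled by your own computation (it is harmless anyway, because the condition is symmetric under swapping $b_1$ and $b_2$).
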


\subparagraph{Arrangement.}
For a set $L$ of $n$ lines in the plane, the arrangement $\arr(L)$ of $L$ is the subdivision of the plane induced by $L$.
A point $p$ has level $k$ with respect to $L$ if exactly $k$ lines of $L$ lie strictly below $p$~\cite[Section~4.7]{Matouvsek2002}.
An edge of $\arr(L)$
is at level $k$ if any relative interior point has level $k$. 
The $k$-th level is the closure of the union of all level-$k$ edges.
Let $L(S^*)$ denote the set of lines appearing in $S^*$.
Since $|L(S^*)|=2n$, $\arr(L(S^*))$ has complexity $O(n^2)$ and can be constructed in
$O(n^2)$ time~\cite{Edelsbrunner1986}.
We denote it by $\arr$ when it is clear from the context.

\section{One line-center problem}\label{sec:1U}
In this section, we consider the one line-center problem (\textsf{1U}).
Given a set $S$ of $n$ point pairs, find a minimum-width strip containing at least one point from each pair, 
i.e., a minimum-width feasible strip.

We solve the \textsf{1U} problem in the dual plane.
After computing $\arr$ in $O(n^2)$ time, we refine it as follows:
for each pair $(\ell_1,\ell_2)\in S^*$, insert on $\ell_1$ a vertex at every $x$-coordinate
of a vertex of $\arr$ lying on $\ell_2$, and symmetrically on $\ell_2$.
Since each line intersects at most $2n-1$ others,
the refinement adds $O(n^2)$ vertices in total and runs in $O(n^2)$ time.
Let $\arrp$ denote the refined arrangement.
For an edge $e$ of $\arrp$ lying on a line $\ell$, let $\lev{e}$ denote its level.
There is an edge on the twin of $\ell$ with the same $x$-range;
call it the \emph{twin edge} of $e$ and denote it by $\twin{e}$.
During the construction of $\arrp$, we compute $\lev{e}$ and $\twin{e}$ for each $e$. 
The whole refinement can be done in $O(n^2)$ time as follows. For a pair $(\ell_1, \ell_2)$, 
we have all vertices of $\arr$ on $\ell_1$ and $\ell_2$ sorted by $x$-coordinates, and the level of edges between two 
consecutive vertices from $\arr$. We sweep two lines from left to right. For each vertex of $\arr$ on $\ell_1$, 
we insert a vertex on $\ell_2$ at the same $x$-coordinate, which splits an edge of $\arr$. We assign $\lev{e}$ and $\twin{e}$ for the new edges in constant time. 
We do the same for each vertex of $\arr$ on $\ell_2$. 

Recall that a strip $\sigma$ in the primal plane corresponds to a vertical line
segment $\sigma^*$ in the dual plane.
For the $i$-th level $\ilev{i}$ of $\arrp$, we consider minimal feasible vertical segments whose lower endpoint lies on $\ilev{i}$.
Let $U_i(x)$ be the minimum possible $y$-coordinate of the upper endpoint of a feasible vertical segment whose lower endpoint 
is the point of $\ilev{i}$ at $x$.
If no such segment exists, define $U_i(x):=\infty$.

To compute a minimum-width feasible strip,
we compute $U_i$ for every $0\le i\le 2n-1$ to find all minimal feasible vertical segments in $\arrp$.
Then we report the one with the minimum width among the dual strips of the feasible vertical segments.
We start with $U_0$. Since the lower endpoint on
$\ilev{0}$ lies below all lines, feasibility depends only on how far upward the segment must extend 
to hit at least one line from each pair. For a line $\ell: y=ax-b$, let $\ell(x)=ax-b$.
Fix $x_0$ and consider a feasible vertical segment
with its lower endpoint on $\ilev{0}$ at $x_0$.
For each pair $(\ell_1,\ell_2)\in S^*$, its upper endpoint must lie on or above $m_Q(x)=\min\{\ell_1(x_0),\ell_2(x_0)\}$.
Hence, $U_0$ is exactly the upper envelope of the functions $m_Q$ over all $Q\in S^*$. 
Since each $m_Q$ is the lower envelope of two lines and consists of two rays,
$U_0$ can be computed in $O(n\log n)$ time~\cite{Hershberger1989}.

We next show how to compute $U_{i+1}$ from $U_i$.
Let $e_i(x)$ be the edge of $\ilev{i}$ at $x$.
Let $\ell_i(x)$ be the line containing $e_i(x)$ and $\twin{\ell_i(x)}$ be the twin of the line.
If $\lev{\twin{e_i(x)}}\le i$, then at $x$ the line $\twin{\ell_i}$ lies on
or below $\ilev{i}$. Hence, for any lower endpoint strictly above $e_i(x)$,
the vertical segment intersects neither $\ell_i$ nor $\twin{\ell_i}$,
and thus it cannot satisfy the pair $(\ell_i,\twin{\ell_i})$.
Thus, $U_{j}(x)=\infty$ for all $j>i$ at this $x$.
Otherwise,
$\twin{\ell_i(x)}$ lies strictly above $e_i(x)$ at $x$.
Then $U_{i+1}(x)=\max\{U_i(x), \twin{\ell_i(x)}(x)\}$.
This observation immediately yields an incremental algorithm. Define
\[
\hat{U}_i(x)=
\begin{cases}
\twin{\ell_i(x)}(x) & \text{if $\lev{\twin{e_i(x)}}>i$} \\
\infty & \text{otherwise.}
\end{cases}
\]
Then $U_{i+1}(x)=\max\{U_i(x),\hat{U}_i(x)\}$.
Since $U_0(x)$ and each $\hat{U}_i(x)$ are piecewise linear, every $U_i(x)$ is piecewise linear as well.
Let $|\cdot|$ denote the number of linear pieces.
During the computation of~$\arrp$, we compute $\twin{e_i(x)}$ and $\lev{\twin{e_i(x)}}$, so
$\hat{U}_i$ can be obtained by tracing $\ilev{i}$ in $O(|\hat{U}_i|)$ time.
Note that $|\hat{U}_i|$ is equivalent to the number of edges in $\ilev{i}$.
Given $U_i$, we can compute $U_{i+1}$ in $O(|U_i|+|\hat{U}_i|)$ time.

For the running time, constructing $\arrp$ and performing the refinement take $O(n^2)$ time, 
and $U_0$ is computed in $O(n\log n)$ time.
For the remaining levels, the total update time is
$
\sum_{i=0}^{2n-1} O\!\bigl(\lvert U_i\rvert + \lvert\hat{U}_i\rvert\bigr)
=
O(n^2)+\sum_{i=0}^{2n-1} O(\lvert U_i\rvert),
$
since the total complexity of all levels in an arrangement of lines is
$O(n^2)$.

For the complexity of $U_i$,
every vertex of the function graph of $U_i$ is a vertex in $\arrp$.
There are $O(n^2)$ vertices in $\arrp$ in total.
Since a single vertex may appear as a vertex of the function graphs of multiple $U_i$'s, a naive bound on
$\sum_{i=0}^{2n-1} O(|U_i|)$
could be as large as $O(n^3)$. We will show that the total complexity
of all curves $U_i$ is in fact only $O(n^2)$.

\begin{lemma}\label{lem:Ui-linear}
$\sum_{i=0}^{2n-1}|U_i|=O(n^2)$.
\end{lemma}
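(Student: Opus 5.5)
The plan is to reinterpret $U_i$ combinatorially, write each $|U_i|$ as a telescoping sum of creations, and then bound the vertices that survive from one level to the next. First, a closed form. For a pair $Q=(\ell_1,\ell_2)\in S^*$ and a point $z$ on $\ilev{i}$ at abscissa $x$, let $c_Q^i(x)$ be the smaller of $\ell_1(x),\ell_2(x)$ among those lines passing on or above $z$. Set $c_Q^i(x)=\infty$ if neither does. Then
\[
U_i(x)=\max_{Q\in S^*} c_Q^i(x),
\]
and this agrees with the recurrence $U_{i+1}=\max\{U_i,\hat U_i\}$. Two consequences follow. First, $U_i(x)$ is nondecreasing in $i$ for each fixed $x$. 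Second, every breakpoint of $U_i$ is of one of two kinds. Either it is a vertex $v=a\cap b$ of $\arr$ at which the maximizing line switches between two \emph{critical} lines $a,b$: each is the lower line of its pair that still lies above $\ilev{i}$, or the only such line. Or it lies at an abscissa where $\ilev{i}$ has a vertex, i.e.\ an endpoint of a piece of $\hat U_{j}$ for some $j<i$.

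Second, a creation/persistence split. By the recurrence, every breakpoint of $U_{i+1}$ is either a breakpoint of $\hat U_i$, or a crossing of $U_i$ with $\hat U_i$, or a breakpoint of $U_i$ that persists because $\hat U_i\le U_i$ near it. The crossings are bounded by $O(|\hat U_i|)$ plus the removed breakpoints of $U_i$, since both functions are piecewise linear and $U_{i+1}$ is their upper envelope. Hence
\[
\sum_i |U_i| \le \sum_i O(|\hat U_i|) + \#\{(i,v): v \text{ persists from } U_i \text{ to } U_{i+1}\}.
\]
The first term is $O(n^2)$, because the total complexity of all levels is $O(n^2)$. Everything therefore reduces to bounding persistence events.

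Third, the main obstacle: bounding persistence. Fix a vertex $v=(x_0,y_0)$ of $\arrp$. The indices $i$ with $v$ on the graph of $U_i$ form an interval $[s,t]$, by monotonicity in $i$ and because $U_i(x_0)\ge \ilev{i}(x_0)$. A step $i\to i+1$ inside this interval means the line $\ell_i$ at $\ilev{i}(x_0)$ leaves the segment $[\ilev{i}(x_0),y_0]$ while its twin stays inside. I would charge the step $(i,v)$ to the edge $e_i(x_0)$ of $\ilev{i}$. The goal is to show that a fixed edge $e$ of $\ilev{i}$ is charged by only $O(1)$ vertices $v$ that are genuine breakpoints, i.e.\ switches between two critical lines $a,b$ lying above $\twin{e}$.

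The intended argument is as follows. Over the $x$-range of $e$, the line $\twin{\ell}$ is sandwiched between $e$ and $U_i$. Any breakpoint of $U_i$ above that range must be a crossing of two critical lines both lying above $\twin{\ell}$ on that range. Since $\arrp$ was refined so that $e$ and $\twin{e}$ have the same $x$-range containing no vertex of $\arr$ on either line, I would try to argue that the portion of $U_i$ over the range of $e$ is a single piece of the upper envelope that $\twin{e}$ "supports". This would show that at most $O(1)$ breakpoints of $U_i$ project into the $x$-range of $e$ for that $i$. Summing over edges of all levels then gives $O(n^2)$.

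If this per-edge bound fails, the fallback is a potential argument. Persistence at $v$ consumes one line of the segment above $\ilev{i}(x_0)$, and each pair can be consumed at most once per vertex. I would then try to show that the number of lines strictly between $\ilev{s}(x_0)$ and $y_0$ at the time $v$ first appears is amortized $O(1)$ over all vertices, which is the delicate point.
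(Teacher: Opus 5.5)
\textbf{There is a genuine gap in Step 3, which is where all the content of the lemma lies.} Steps 1--2 only reduce $\sum_i|U_i|$ to $O(n^2)$ plus the number of persistence events $(i,v)$. The displayed inequality also silently needs the total number of removed breakpoints to be $O(n^2)$; monotonicity of $U_i(x_0)$ in $i$ should give this, but you should state it. The bound on persistence events is never established.

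Your per-edge $O(1)$ claim is false. Take the pair $(\ell,\ell')$ with $\ell: y=-100$ and $\ell': y=-99$. For $k=1,\dots,n-1$, take pairs $(a_k,b_k)$ with $a_k: y=2t_kx-t_k^2$ tangent to $y=x^2$ at $t_k\in(0,1)$, and $b_k$ far above $a_k$. Over $[0,1]$ no line crosses $\ell$ or $\ell'$. So a single edge $e$ of $\ilev{0}$ on $\ell$ covers $[0,1]$, and $\twin{e}$ lies on $\ell'$ at level $1$. There $U_0=\max_k a_k$, which has $n-2$ breakpoints in $(0,1)$. Each is a genuine switch between two critical lines lying above $\twin{e}$. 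Since $\hat U_0=\ell'$ lies far below, all of them persist to $U_1$ and all are charged to $e$.

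The refinement of $\arrp$ only rules out crossings \emph{on} $\ell$ and $\twin{\ell}$. It says nothing about crossings among the lines above them, so $U_i$ over one edge need not be a single piece. Your fallback, that the number of lines between $\ilev{s}(x_0)$ and $y_0$ is amortized $O(1)$ per vertex, just restates the lemma.

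\textbf{The missing idea is to amortize per line rather than per edge.} Define $U_p$ as the minimal upper endpoint of a feasible vertical segment whose lower endpoint lies on $p^*$. On any edge of $\ilev{i}$ supported by $p^*$, $U_i$ coincides with $U_p$. After refining both families of graphs at the vertices of $\arrp$, this gives $\sum_i|U_i|=\sum_{p\in P(S)}|U_p|+O(n^2)$. In your terms, charge the step $(i,v)$ to the \emph{line} $\ell_i(x_0)$; distinct $i$ at the same $x_0$ give distinct lines.

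Then $U_p$ is the upper envelope, over pairs, of the lower envelope of the pair's two lines restricted to the part above $p^*$. Each such piece consists of $O(1)$ rays and of segments having an endpoint on $p^*$. Extending each segment beyond that endpoint into a ray does not change the envelope above $p^*$, and $U_p\ge p^*$ always. The upper envelope of $O(n)$ rays has complexity $O(n)$, so $|U_p|=O(n)$, which gives $O(n^2)$ in total. In the example above, $e$ absorbs $\Theta(n)$ charges, but the whole line $\ell$ absorbs only $O(n)$ across all of its edges. That is exactly the amortization your per-edge scheme cannot see.
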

\begin{proof}
We first establish the weaker bound
$\sum_{i=0}^{2n-1} |U_i| = O(n^2\alpha(n))$,
which will later be improved to $O(n^2)$.
Fix a point
$p \in P(S)$ and define $U_p(x)$ as the minimum
possible $y$-coordinate of the upper endpoint of a feasible vertical segment whose lower endpoint is on $p^*$ at $x$.
If none exists, set $U_p(x):=\infty$.
Then $U_p$ is piecewise linear.

To compare $\sum_{i=0}^{2n-1} |U_i|$ with $\sum_{p \in P(S)} |U_p|$, 
refine the graphs as follows: obtain $U_i^+$ from $U_i$ by inserting vertices at all
$x$-coordinates where $\ilev{i}$ has a vertex in $\arrp$, and obtain $U_p^+$
from $U_p$ by inserting vertices at all $x$-coordinates where $p^*$ has a
vertex in $\arrp$.
This adds at most $|\ilev{i}|$ vertices to $U_i$, and
since $\sum_{i=0}^{2n-1} |\ilev{i}| = O(n^2)$, the total number of inserted vertices over all $i$ is $O(n^2)$.
Similarly, each line $p^*$ meets $O(n)$ vertices of $\arrp$, so the total number of inserted vertices over all $p\in P(S)$ is $O(n^2)$.

Fix an edge $e$ of $\ilev{i}$ with $x$-range $[x_1, x_2]$, supported by a line $p^* \in L(S^*)$. 
On this $x$-range, $p^*$ lies at level $i$, and by the
definitions of $U_i$ and $U_p$, we have $U_i(x) = U_p(x)$ for all $x \in [x_1, x_2]$.
By construction, $x_1, x_2$ are vertices of both $U_i^+$ and
$U_p^+$, and neither $\ilev{i}$ nor $p^*$ has any vertex in $x$-range $(x_1, x_2)$.
Thus, the edges of
$U_i^+$ over $[x_1, x_2]$ coincide exactly with the edges of $U_p^+$ over
$[x_1, x_2]$.
Hence, each edge of $U_i^+$ on an $x$-range where
$\ilev{i}$ is supported by $p^*$ corresponds bijectively to the edge of $U_p^+$ on the same
$x$-range, where $p^*$ is at level $i$.
It follows $\sum_{i=0}^{2n-1} |U_i^+| \;=\; \sum_{p \in P(S)} |U_p^+|.$
Since the refinement adds $O(n^2)$ vertices in total,
$\sum_{i=0}^{2n-1} |U_i| + O(n^2) = \sum_{p\in P(S)} |U_p| + O(n^2)$.
It suffices to bound $\sum_{p \in P(S)} |U_p|$.

We bound the complexity of each $U_p$. For each
$(\ell_1,\ell_2)\in S^*$, restrict $\ell_1$ and $\ell_2$ to the portions
lying above $p^*$, which yields two rays. 
A vertical segment
$\sigma^*$ with lower endpoint on $p^*$ intersects $\ell_1$ or $\ell_2$ iff
the upper endpoint of $\sigma^*$ lies above the lower envelope of these two
rays. Thus, $U_p$ is the upper envelope of the lower
envelopes obtained from all pairs.

For each pair, the lower envelope of the two restricted rays consists of
$O(1)$ line segments and rays. Since there are $n$ pairs, the total input
to the upper envelope computation has complexity $O(n)$. The upper envelope of
$O(n)$ line segments and rays has complexity $O(n\alpha(n))$~\cite{Hart1986,Wiernik1988}. Therefore,
$|U_p| = O(n\alpha(n))$ for every $p\in P(S)$. Summing over all lines in $S^*$,
we obtain $\sum_{p\in P(S)} |U_p| = O(n^2\alpha(n))$, and thus
$\sum_{i=0}^{2n-1} |U_i| = O(n^2\alpha(n))$.

We show that $U_p$ has only linear complexity.
The lower envelope for a fixed pair consists of at most two
rays together with $O(1)$ line segments. Moreover, every such line
segment has one endpoint lying on $p^*$. 
We extend it from the endpoint, thereby replacing it with a ray. 
Then the original line segment is exactly the portion of this ray lying above $p^*$. 
Thus, the lower envelope of each pair can be represented by $O(1)$ rays.

Consequently, when we compute the upper envelope over all pairs, the result is
precisely the same as the upper envelope of these rays. Since the upper
envelope of $O(n)$ rays has complexity $O(n)$~\cite{Alegria2024,Sharir1995}, it follows that $U_p$ also has complexity $O(n)$.
Thus, $\sum_{p\in P(S)}|U_p|=O(n^2)$, and hence $\sum_{i=0}^{2n-1}|U_i|=O(n^2)$.
The possible cases of the lower envelope,
depending on the relative position of the pair, are illustrated in
Figure~\ref{fig:pair_lower}.
\end{proof}

\begin{figure}[ht]%
  \centering
\includegraphics[width=\textwidth]{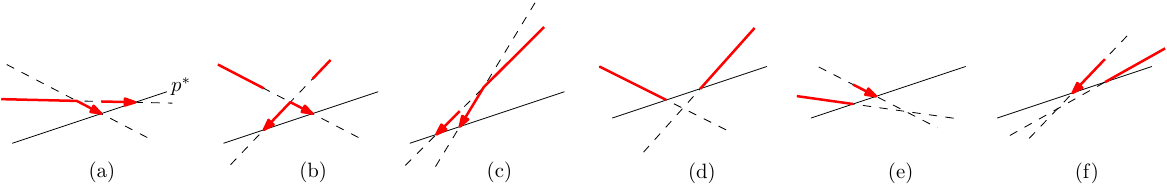}%
  \caption{
  The possible cases of the lower envelope of a pair
$(\ell_1, \ell_2) \in S^*$ (dashed) restricted to the portion above $p^*$,
with arrows showing the extension of each line segment to a ray.
  }
  \label{fig:pair_lower}%
\end{figure}

By Lemma~\ref{lem:Ui-linear}, we can compute $U_i$ for all $0\le i\le 2n-1$ in $O(n^2)$
time. We now compute a minimum-width feasible strip.
Let $\ilev{i}(x)$ be the $y$-coordinate of the point on $\ilev{i}$ at $x$.
If $U_i(x_0)<\infty$ for some $i$ and $x_0$, then there exists a feasible
vertical segment at $x=x_0$ whose lower endpoint lies on $\ilev{i}$ and
whose length is $U_i(x_0)-\ilev{i}(x_0)$. In the primal plane, it corresponds to a feasible strip
of orientation $\theta_0=\tan^{-1}x_0+\pi/2$ and width
$(U_i(x_0)-\ilev{i}(x_0))\cdot\sin\theta_0$.
Thus, for $\theta\in [0,\pi)$  we define the piecewise sinusoidal width function
$w_i(\theta)=(U_i(\tan(\theta-\pi/2))-\ilev{i}(\tan(\theta-\pi/2)))\cdot\sin\theta$.
Breakpoints of $w_i$ occur only when either $U_i$ or $\ilev{i}$ changes its supporting line; hence the complexity of $w_i$ is 
$O(|U_i|+|\ilev{i}|)$, and all $w_i$ can be constructed in overall $O(n^2)$ time.
For each piece, we compute its minimum by checking endpoints and critical points, and take the global minimum
in time linear in the total complexity, yielding a minimum-width feasible strip.

\begin{theorem}\label{thm:one_strip}
    Given $n$ point pairs in $\mathbb{R}^2$, we can solve the \textsf{1U} problem in $O(n^2)$ time.
\end{theorem}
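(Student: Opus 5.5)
The proof of Theorem~\ref{thm:one_strip} assembles the pieces developed above. The plan is to argue correctness first and then account for the running time.

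\emph{Correctness.} Take a minimum-width feasible strip $\sigma$. Assume first that it is not vertical, so its dual $\sigma^*$ is a vertical segment at some $x_0$. We may shrink $\sigma$ until both bounding lines pass through input points. Then the lower endpoint of $\sigma^*$ lies on some line of $L(S^*)$, and hence on some level $\ilev{i}$ of $\arrp$.

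By Observation~\ref{obs:dual_strip}, $\sigma^*$ is feasible. Minimality of $\sigma$ forces its upper endpoint to be at height exactly $U_i(x_0)$: any shorter segment with the same lower endpoint is infeasible, by the definition of $U_i$. So the optimum equals $\min_i \min_\theta w_i(\theta)$.

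It remains to check that the recurrence $U_{i+1}=\max\{U_i,\hat U_i\}$ computes the true $U_i$. Moving the lower endpoint from $\ilev{i}$ to $\ilev{i+1}$ loses exactly one line, namely $\ell_i(x)$. The other lines keep their above/below status, because levels are ordered: a line of level greater than $i$ stays above. Hence the only new constraint is that the segment must now reach the twin line, if that twin is still above. This is exactly the case analysis given in the text, and induction from $U_0$ (the upper envelope of the $m_Q$) finishes the argument.

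The vertical orientation ($\theta=0$, which has no dual point) is handled separately. Rotate the plane slightly, or solve directly: sort by $x$, then find the minimum-width $x$-interval containing one point of each pair with a two-pointer sweep in $O(n\log n)$ time.

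\emph{Running time.} Building $\arr$ and its refinement $\arrp$, together with the levels and twins, takes $O(n^2)$ time. Computing $U_0$ takes $O(n\log n)$ time. Each update costs $O(|U_i|+|\hat U_i|)$. By Lemma~\ref{lem:Ui-linear}, $\sum_i|U_i|=O(n^2)$, and the total size of all levels is $O(n^2)$, so all the $U_i$ are obtained in $O(n^2)$ time.

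Each $w_i$ has $O(|U_i|+|\ilev{i}|)$ sinusoidal pieces. Each piece has the form $(\alpha+\beta\tan(\theta-\pi/2))\sin\theta$, which is a linear combination of $\sin\theta$ and $\cos\theta$. Its minimum over an interval is therefore found in $O(1)$ time by checking the endpoints and the single critical point. The total is $O(n^2)$.

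\emph{Main obstacle.} The only delicate step is justifying that optimal strips are captured by lower endpoints on arrangement levels, together with the degenerate cases: ties at vertices, edges where the twin coincides with $\ilev{i}$, and the vertical direction. I would handle ties by taking closures of the pieces and noting that the functions are continuous from the appropriate side. The complexity bound itself is already supplied by Lemma~\ref{lem:Ui-linear}.
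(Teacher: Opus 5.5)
Your proof is correct and follows the paper's argument essentially step for step. You use the refined arrangement $\arrp$ with twin edges and levels, compute $U_0$ as the upper envelope of the pairwise lower envelopes, apply the recurrence $U_{i+1}=\max\{U_i,\hat U_i\}$ with the $O(n^2)$ bound of Lemma~\ref{lem:Ui-linear}, and minimize the piecewise sinusoidal $w_i$. Two additions are sound and make the argument slightly more complete than the paper's: your explicit correctness argument, and your separate $O(n\log n)$ two-pointer treatment of vertical strips. Vertical strips have no dual point under this duality, and the paper covers them only implicitly, as the limit $\theta\to 0$ of the unbounded pieces.
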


\section{Two parallel line-centers}\label{sec:2P}
We study the two parallel line-center problem, denoted by \textsf{2P}.
Given a set $S$ of $n$ point pairs, find an assignment $\chi=(R,B)$ and two parallel strips covering $R$ and $B$ that minimize the width of the wider strip.
For a fixed $\chi$ and orientation $\theta$, let $\sigma_r(\chi,\theta)$ and $\sigma_b(\chi,\theta)$ denote minimum-width red and blue strips of orientation $\theta$, respectively.

We first consider the special case $\theta=\pi/2$, where both strips are horizontal.
We call it the two horizontal line-center problem, denoted by \textsf{2H}.
For each pair $(p_1,p_2)$ with $y(p_1) \leq y(p_2)$, we say that $p_1$ is the \emph{lower point} and $p_2$ is the \emph{upper point}.

\begin{lemma}\label{lem:horizontal}
  For \textsf{2H}, there exists an optimal assignment
  such that in every input pair, the lower point is colored red and the upper point is colored blue.
  Moreover, such an optimal solution can be computed in $O(n)$ time.
\end{lemma}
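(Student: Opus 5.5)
Two horizontal strips are just two intervals on the $y$-axis, so I would project every point onto its $y$-coordinate and argue with intervals. Take any optimal solution with horizontal strips $\sigma,\sigma'$ of widths $w,w'$, and write their $y$-ranges as $[a,a+w]$ and $[b,b+w']$. Relabel if necessary so that $a\le b$. By Lemma~\ref{lem:feasible}, both strips are feasible and $P(S)\subset\sigma\cup\sigma'$. I will show that the assignment sending every lower point to $\sigma$ and every upper point to $\sigma'$ is valid for these same two strips. Since the width does not change, this gives the first claim; coloring $\sigma$ red and $\sigma'$ blue matches the statement, and the objective is symmetric in the colors.

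The key step is to check a single pair $(p_1,p_2)$ with $y_1=y(p_1)\le y_2=y(p_2)$, and I split on the two endpoints.
\begin{itemize}
\item If $a+w\ge b+w'$, then $\sigma'\subseteq\sigma$. Every point lies in $\sigma$, and feasibility of $\sigma'$ puts at least one point of the pair in $\sigma'$. Let $y_j$ be such a point. Then $y_2\ge y_j\ge b$. If $y_2\le b+w'$, then $p_2\in\sigma'$ and $p_1\in\sigma$. Otherwise $y_2>b+w'$, so $p_2\notin\sigma'$, which means $p_1\in\sigma'$. This is the one case where the natural assignment can fail.
\item If $a+w<b+w'$, then $p_1\in\sigma$ and $p_2\in\sigma'$ follow quickly from feasibility and the covering condition.
\end{itemize}
To repair the bad sub-case I would use a different choice in the nested case. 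If one strip contains the other, every point lies in the larger strip, and the smaller strip only has to contain at least one point per pair. So I would instead choose $\sigma'$ to be the strip spanning $[\max_i \min(y_1,y_2), \max_i y_2]$, that is, the strip covering all upper points. Its width is at most the width of any feasible strip, because a feasible strip contains some point of each pair. Hence replacing the smaller strip by this one does not increase the width, and the natural assignment is then valid.

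In the non-nested case, $p_1\in\sigma$ should follow as follows. The case $y_1<a$ is impossible because such a point is in neither strip. If $y_1>a+w$, then $p_1\in\sigma'$ only, so $p_2$ must be in $\sigma$, which forces $y_2\le a+w<y_1$, a contradiction. The argument that $p_2\in\sigma'$ is symmetric.

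With the first claim established, the algorithm computes the minimum-width horizontal strip of the lower points and that of the upper points, each in $O(n)$ time by taking the minimum and maximum $y$-coordinate. It outputs the larger of the two widths. The main obstacle is making the case analysis airtight, in particular the nested case, where the strip may have to be replaced as described above rather than kept unchanged.
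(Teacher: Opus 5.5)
Your route differs from the paper's. The paper argues on assignments. If an assignment colors $p_{\min}$ and $p_{\max}$ alike, its objective is already $y(p_{\max})-y(p_{\min})$, so one may assume $p_{\min}$ is red and $p_{\max}$ is blue. Each pair colored against the lower/upper rule is then swapped; since $p_{\min}$ stays red and $p_{\max}$ stays blue, neither strip widens. You instead fix an optimal pair of strips, order them by lower boundary, and check that the natural assignment fits them.

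Your non-nested case is correct. From $a\le b$ and covering you get $y_1\ge a$, and feasibility of $\sigma$ gives $y_1\le a+w$. Symmetrically, $a+w<b+w'$ together with feasibility of $\sigma'$ puts $p_2$ in $\sigma'$. So in that case you even get that the given optimal strips themselves accommodate the natural assignment. Your nested case plays the role of the paper's case where $p_{\min}$ and $p_{\max}$ share a color.

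The nested case, however, rests on a false claim. The strip spanning all upper points does not in general have width at most that of every feasible strip. Take two pairs with $y$-coordinates $\{0,0\}$ and $\{0,10\}$: the strip $0\le y\le 0$ is feasible with width $0$, while the upper points span width $10$. Moreover, strips with $y$-ranges $[0,10]$ and $[0,0]$ form an optimal nested configuration (the optimum is $10$) in which your bad sub-case actually occurs. So the replacement strip can be strictly wider than $\sigma'$.

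Your explicit interval $[\max_i\min(y_1,y_2),\max_i y_2]$ is also not the strip of the upper points. For the pairs $\{0,1\}$ and $\{5,10\}$ it misses the upper point $1$; you want $[\min_i y_2,\max_i y_2]$.

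The correct justification is simpler. When $\sigma'\subseteq\sigma$, the strip $\sigma$ contains all of $P(S)$, so the objective is $\max\{w,w'\}=w\ge y(p_{\max})-y(p_{\min})$. The strips spanning the lower points and the upper points each have width at most $y(p_{\max})-y(p_{\min})$. Hence the natural assignment is optimal in this case too. With this repair your proof is complete, and the $O(n)$ algorithmic part is fine.
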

\begin{proof}
  Since both strips are horizontal, the width of each strip is determined only by the $y$-coordinates of its defining points.
  Thus, we may project all points onto the $y$-axis, and regard each input pair as a vertical segment connecting its two points.

Let $p_{\min}$ and $p_{\max}$ be the points in $P(S)$ with minimum and maximum $y$-coordinates, respectively.
  Let $\chi^\circ=(R^\circ,B^\circ)$ be the assignment in which, for every input pair, the lower point is colored red and the upper point is colored blue.
  Then $p_{\min}\in R^\circ$ and $p_{\max}\in B^\circ$.
  Suppose, for contradiction, that 
  there exists an assignment $\chi'=(R',B')$ such that
  $\max\{w(\sigma_r(\chi',\pi/2)),w(\sigma_b(\chi',\pi/2))\}<\max\{w(\sigma_r(\chi^\circ,\pi/2)),w(\sigma_b(\chi^\circ,\pi/2))\}$.
  Then $\chi'$ cannot color $p_{\min}$ and $p_{\max}$ the same: otherwise, one strip would contain both and hence have width
$y(p_{\max})-y(p_{\min})$, which is at least the objective value under $\chi^\circ$.
  By swapping the colors if necessary, we may assume $p_{\min}\in R'$ and $p_{\max}\in B'$.

  Since $\chi'\ne\chi^\circ$, there exists a \emph{violating pair} $(p,q)$ with $y(p)<y(q)$, $p\in B'$, and $q\in R'$.
  Let $\chi''$ be obtained from $\chi'$ by swapping the colors of $p$ and $q$.
  For the red strip, $q$ is replaced by $p$, while $p_{\min}$ remains red; thus the maximum $y$-coordinate among red points cannot increase, and
$w(\sigma_r(\chi'',\pi/2))\le w(\sigma_r(\chi',\pi/2))$.
Symmetrically, for the blue strip, $p$ is replaced by $q$, while $p_{\max}$ remains blue; thus the minimum $y$-coordinate among blue points cannot decrease, and
$w(\sigma_b(\chi'',\pi/2))\le w(\sigma_b(\chi',\pi/2))$.
Hence the objective value does not increase.
Repeating this exchange eliminates all violating pairs, transforming $\chi'$ into $\chi^\circ$ without increasing the objective value, 
contradicting the choice of $\chi'$.
Therefore, $\chi^\circ$ is optimal.

To compute $\chi^\circ$, for each pair color the lower point red and the upper point blue, in total $O(n)$ time.
Then compute the red (resp., blue) strip width as $\max_{r\in R^\circ}y(r)-\min_{r\in R^\circ}y(r)$ (resp., $\max_{b\in B^\circ}y(b)-\min_{b\in B^\circ}y(b)$) 
by a single scan in $O(n)$ time. See Figure~\ref{fig:dual_col}(a).
\end{proof}

\begin{figure}[ht]%
  \centering
{\includegraphics[width=.9\textwidth]{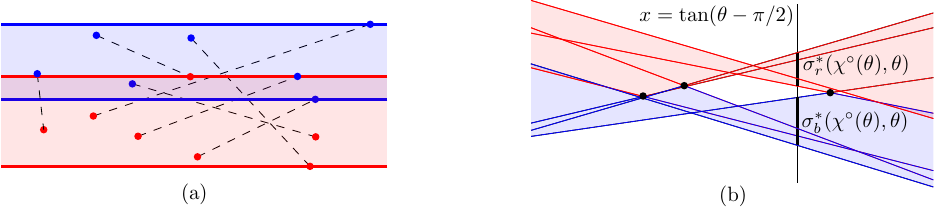}}%
  \caption{
  (a) Optimal assignment at $\theta=\pi/2$.
(b) In the dual plane, $\sigma_r(\chi^\circ(\theta),\theta)$ and $\sigma_b(\chi^\circ(\theta),\theta)$
appear as vertical segments on the line $x=\tan\theta$, which moves from left to
right as $\theta$ increases.}
  \label{fig:dual_col}%
\end{figure}

We maintain an optimal assignment $\chi^\circ(\theta)$ and the
corresponding strips $\sigma_r(\chi^\circ(\theta),\theta)$ and
$\sigma_b(\chi^\circ(\theta),\theta)$ while increasing $\theta$ from $0$ to $\pi$.
We say that a point $p$ is \emph{$\theta$-above} a point $q$ if $p$ lies on or above the line through $q$ with direction $\theta$.
Similarly, we say that $p$ is \emph{$\theta$-below} $q$ if $p$ lies below this line.
For a point set $P$, a point $p\in P$ is \emph{$\theta$-maximal} if it is $\theta$-above every point in $P\setminus\{p\}$, and \emph{$\theta$-minimal} if it is $\theta$-below every point in $P\setminus\{p\}$.
By applying Lemma~\ref{lem:horizontal}, we may assume that, for every input pair, the $\theta$-below point is colored red and the other point is colored blue.
For fixed orientation $\theta$, we choose such an optimal assignment and denote it by $\chi^\circ(\theta)$.
For simplicity, we write $\sigma_r(\theta)$ and $\sigma_b(\theta)$ for $\sigma_r(\chi^\circ(\theta),\theta)$ and $\sigma_b(\chi^\circ(\theta),\theta)$, respectively.

The $\theta$-minimal point of $P(S)$ defines the lower boundary of $\sigma_r(\theta)$, and the $\theta$-maximal point of $P(S)$ defines the upper boundary of $\sigma_b(\theta)$.
To determine the other boundary of $\sigma_r(\theta)$, consider the $\theta$-below point of each input pair.
Among these points, the $\theta$-maximal point defines the upper boundary of $\sigma_r(\theta)$.
Symmetrically, consider the $\theta$-above point of each input pair.
Among these points, the $\theta$-minimal point defines the lower boundary of $\sigma_b(\theta)$.
As $\theta$ increases, it suffices to update the defining points of the two strips, which allows us to maintain their widths.
To do this efficiently, we use the dual transformation $S^*$ of $S$.
In the dual plane, the strips $\sigma_r(\theta)$ and $\sigma_b(\theta)$ correspond to two vertical segments $\sigma_r^*(\theta)$ and $\sigma_b^*(\theta)$ on the vertical line $\ell_\theta:x=\tan(\theta-\pi/2)$.
As $\theta$ increases, $\ell_\theta$ moves from left to right, and we maintain the endpoints of these two vertical segments.
The upper endpoint of $\sigma_r^*(\theta)$ lies on the upper envelope of $\arr$.
The lower endpoint lies on the lower envelope of the upper envelopes defined by the input pairs in $S^*$.
Symmetrically, the lower endpoint of $\sigma_b^*(\theta)$ lies on the lower envelope of $\arr$.
The upper endpoint lies on the upper envelope of the lower envelopes defined by the input pairs in $S^*$.
See Figure~\ref{fig:dual_col}(b).

For each input pair, the lower envelope of its two dual lines consists of two rays.
Hence, over all input pairs, we obtain a collection of $O(n)$ rays.
The upper envelope of these rays has complexity $O(n)$ and can be computed in $O(n\log n)$ time~\cite{Alegria2024,Sharir1995}.
The lower envelope of $\arr$ also has complexity $O(n)$ and can be computed in $O(n\log n)$ time.
Therefore, all orientations at which an endpoint of $\sigma_r^*(\theta)$ changes can be computed and sorted in $O(n\log n)$ time.
By a symmetric argument, all orientations at which an endpoint of $\sigma_b^*(\theta)$ changes can also be computed and sorted in $O(n\log n)$ time.

\begin{lemma}\label{lem:event}
The number of orientations $\theta$ at which the defining points of $\sigma_r^*(\theta)$ or $\sigma_b^*(\theta)$ change is $O(n)$. 
Moreover, all such orientations can be computed and sorted in $O(n\log n)$ time.
\end{lemma}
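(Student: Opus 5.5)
The plan is to identify each of the four endpoints of $\sigma_r^*(\theta)$ and $\sigma_b^*(\theta)$ with an explicit envelope in the dual plane. Each envelope will be shown to have linear complexity and to be computable in $O(n\log n)$ time. The event orientations are then the $x$-coordinates of envelope vertices, mapped back through $\theta=\tan^{-1}x+\pi/2$.

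By the discussion preceding the lemma, with the assignment $\chi^\circ(\theta)$ fixed as in Lemma~\ref{lem:horizontal} applied to direction $\theta$, the four endpoints on $\ell_\theta$ are as follows.
\begin{itemize}
\item The upper endpoint of $\sigma_r^*(\theta)$ lies on the upper envelope of $L(S^*)$; this is the $\theta$-minimal point of $P(S)$, in agreement with the duality $p^*(x)=ax-b$, which reverses the vertical order.
\item The other endpoint of $\sigma_r^*(\theta)$ lies on $E_r$, the lower envelope over pairs of the upper envelope $\max\{\ell_1,\ell_2\}$.
\item The lower endpoint of $\sigma_b^*(\theta)$ lies on the lower envelope of $L(S^*)$.
\item The other endpoint of $\sigma_b^*(\theta)$ lies on $E_b$, the upper envelope of $\min\{\ell_1,\ell_2\}$.
\end{itemize}
The defining points change exactly when $\ell_\theta$ crosses a vertex of one of these four envelopes. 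The orientation range $\theta=\pi/2$ corresponds to a vertical direction, that is, $x=\pm\infty$; it is handled as a single extra event, or by a rotation that is symmetric to this one.

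For the complexity bound, I first note that the upper and lower envelopes of the $2n$ lines each have $O(n)$ vertices. Each can be computed in $O(n\log n)$ time by sorting slopes and running the standard convex-hull-type scan, or via the dual convex hull.

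For $E_b$, each $\min\{\ell_1,\ell_2\}$ is a concave chain made of two rays. So $E_b$ is the upper envelope of at most $2n$ rays, and the ray-envelope bound cited above (\cite{Alegria2024,Sharir1995}) gives $O(n)$ complexity and $O(n\log n)$ construction time.

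For $E_r$, I apply the symmetric statement obtained by reflecting $y\mapsto -y$. The lower envelope of the convex two-ray functions $\max\{\ell_1,\ell_2\}$ becomes the upper envelope of $2n$ rays, so again it has $O(n)$ complexity and can be computed in $O(n\log n)$ time. This step is the only one needing care. A naive treatment views it as an envelope of segments and gets an $\alpha(n)$ factor. The point to verify is that replacing each pairwise function by its two full rays does not alter the envelope; this holds because the pieces come from each pair's own two-ray function.

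Finally, I take the union of the $O(n)$ vertex $x$-coordinates from the four envelopes and sort them in $O(n\log n)$ time. Converting these to orientations is monotone, so they stay sorted. Every change of a defining point is among these orientations, since between consecutive ones each endpoint stays on a single supporting line. This gives both claims of the lemma.
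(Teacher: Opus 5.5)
Your proposal is correct and follows the paper's argument. Both trace the four endpoints along the upper and lower envelopes of $L(S^*)$ and along the two envelopes of the per-pair two-ray chains, each reducing to an envelope of $O(n)$ lines or rays with $O(n)$ complexity and $O(n\log n)$ construction time, and both then sort the vertex $x$-coordinates, which map monotonically to orientations.

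Two small slips. The vertical orientation, at $x=\pm\infty$, is $\theta=0$, not $\theta=\pi/2$; the latter is the horizontal case $x=0$. Also, your worry about segments and an $\alpha(n)$ factor does not arise here, since each chain $\min\{\ell_1,\ell_2\}$ or $\max\{\ell_1,\ell_2\}$ already consists of exactly two rays, or a single line if the two dual lines are parallel.
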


Define
$f(\theta)=\max\{w(\sigma_r(\theta)),\,w(\sigma_b(\theta))\}$.
Assume that, for every orientation $\theta$, the defining points of $\sigma_r(\theta)$ and $\sigma_b(\theta)$ are known.
Our goal is to find an
orientation $\theta^\circ$ minimizing $f(\theta)$.
In the dual plane, the widths $w(\sigma_r(\theta))$ and $w(\sigma_b(\theta))$ can be written as piecewise-linear functions of
the sweep parameter, each of complexity $O(n)$ by Lemma~\ref{lem:event}.
Their pointwise maximum is the upper envelope of two piecewise-linear functions and thus also has complexity $O(n)$.
Thus, $f(\theta)$ is a piecewise sinusoidal function of complexity $O(n)$.
Once all defining points of $\sigma_r(\theta)$ and $\sigma_b(\theta)$ are available over $\theta\in[0,\pi)$, 
the function $f$ can be evaluated over all intervals in total $O(n)$ time.

After computing and sorting all event orientations, we can generate all candidate orientations for minimizing $f(\theta)$ in additional $O(n)$ time, evaluate $f(\theta)$ at each candidate, and select $\theta^\circ$ attaining the minimum.
Once $\theta^\circ$ is found, we can reconstruct an optimal assignment and the two strips in $O(n)$ time.

\begin{theorem}\label{thm:parallel}
  Given $n$ point pairs in $\mathbb{R}^2$, we can solve the \textsf{2P} problem in $\Theta(n\log n)$ time.
\end{theorem}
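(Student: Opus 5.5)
The theorem has two parts: an $O(n\log n)$ upper bound and a matching $\Omega(n\log n)$ lower bound in the algebraic decision-tree model.

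\textbf{Upper bound.} Fix an orientation $\theta$. Lemma~\ref{lem:horizontal} applies after rotating the plane by $\pi/2-\theta$, so the $\theta$-below/$\theta$-above assignment $\chi^\circ(\theta)$ is optimal among assignments using strips of orientation $\theta$. Hence the \textsf{2P} optimum equals $\min_\theta f(\theta)$.

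In the dual plane, each of the four strip boundaries is a vertex or edge of one of four structures:
\begin{itemize}
\item the upper envelope of $\arr$;
\item the lower envelope of $\arr$;
\item the upper envelope of the $O(n)$ rays forming the per-pair lower envelopes;
\item symmetrically, the lower envelope of the rays forming the per-pair upper envelopes.
\end{itemize}
Each structure has complexity $O(n)$ and can be computed in $O(n\log n)$ time. Merging their breakpoints by $x$-coordinate gives the $O(n)$ sorted events of Lemma~\ref{lem:event}.

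Between consecutive events, all four defining points are fixed. The width of each strip is therefore the vertical dual length, which is linear in $x$, multiplied by $\sin\theta$ with $x=\tan(\theta-\pi/2)$. The crossing of the two widths can be found in $O(1)$ per interval, and on each resulting piece the maximum is one sinusoid of constant description. Minimizing it takes $O(1)$ by checking endpoints and critical points. A linear scan over all intervals then yields $\theta^\circ$, and we output $\chi^\circ(\theta^\circ)$.

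\textbf{Lower bound.} Reduce a problem known to require $\Omega(n\log n)$ time, such as element uniqueness or set disjointness, to \textsf{2P} in $O(n)$ time. One natural attempt is to encode numbers $a_1,\dots,a_n$ as pairs placed on a configuration where the optimal parallel strips have width $0$ (or below a threshold) iff some property holds. For example, place the points on two lines with prescribed directions so that two parallel width-zero strips exist iff the numbers are distinct.

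A cleaner alternative is to inherit the bound from the monochromatic \textsf{2P}-type problem, where parallel two-line-center lower bounds are known. Let each pair consist of two identical copies of a point. Then $R=B$ as multisets, and the problem degenerates to a one-line-center width problem. That problem has an $\Omega(n\log n)$ lower bound via its equivalence with the convex-hull width (Toussaint's $\Theta(n\log n)$ algorithm is matched by a known lower bound for width computation). I would try this reduction first.

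\textbf{Expected obstacle.} The upper bound is routine given the preceding discussion. The main obstacle is making the lower bound rigorous: citing a proper $\Omega(n\log n)$ bound for minimum-width computation, or else building an explicit reduction from set disjointness with a width-$0$ decision threshold and checking that degenerate (coincident) pairs are allowed by the model.
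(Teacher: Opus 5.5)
Your proposal is correct and follows essentially the same route as the paper. The upper bound applies Lemma~\ref{lem:horizontal} at every orientation, uses the $O(n)$ sorted events of Lemma~\ref{lem:event} in the dual, and minimizes the piecewise-sinusoidal $f(\theta)$ interval by interval. Your preferred lower bound is exactly the paper's reduction: replace each point by a coincident twin pair, so that $R=B=P$ and the \textsf{2P} optimum equals the line-center optimum, which has an $\Omega(n\log n)$ algebraic decision-tree lower bound (Lee and Wu). The vaguer element-uniqueness construction you mention first is unnecessary and can be dropped.
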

\begin{proof}
  By Lemma~\ref{lem:horizontal} and Lemma~\ref{lem:event}, the $O(n)$ candidate orientations for minimizing $f(\theta)$ can be computed in $O(n\log n)$ time.
  After these candidate orientations are computed and sorted, we evaluate $f(\theta)$ at each candidate orientation and select an orientation $\theta^\circ$ minimizing $f(\theta)$.
  Once $\theta^\circ$ is found, the corresponding optimal assignment and the two strips can be reconstructed in $O(n)$ time.
  Therefore, the overall running time is $O(n\log n)$.

  We show that this running time is optimal.
  The lower bound follows by a reduction from the line-center problem.
  Recall that the line-center problem asks, for a given set of points in the plane, to find 
  a line minimizing the maximum distance from the line to the points.
  Equivalently, it asks for a minimum-width strip covering all input points.
  This problem has an $\Omega(n\log n)$ lower bound in the algebraic decision-tree model~\cite{Lee1986}.

Given an instance $P$ of the line-center problem, 
construct a \textsf{2P} instance $S$ by replacing each $p\in P$ with a pair of coincident points at $p$.
  Under any assignment, one copy of $p$ is colored red and the other blue.
  Hence, the red and blue point sets are both exactly $P$.

Therefore any feasible \textsf{2P} solution consists of two parallel strips of the same orientation, one covering the red set and one covering the blue set, i.e., both covering $P$.
For any fixed orientation, each strip must have width at least the width of a minimum strip covering $P$ in that orientation; conversely, any strip covering $P$ can be used for both colors.
Hence the optimum value of the constructed \textsf{2P} instance equals the optimum value of the original line-center instance.
It follows that \textsf{2P} requires $\Omega(n\log n)$ time in the same model.
Combining the upper and lower bounds yields a tight $\Theta(n\log n)$ running time.
\end{proof}

\section{Two line-centers with prescribed orientations}\label{sec:2O2}
We study the $\mathsf{2O_2}$ problem in which the orientations of $\sigma_r$ and $\sigma_b$ are prescribed as $\phi_r$ and $\phi_b$, respectively.
Without loss of generality, assume that $\sigma_b$ is vertical ($\phi_b=0$). If $\phi_r=0$, the problem is solvable in $O(n)$ time by Lemma~\ref{lem:horizontal}, so we assume $\phi_r\neq 0$.

Sort the points in $P(S)$ by their order in the direction orthogonal to $\phi_r$,
from $\phi_r$-below to $\phi_r$-above.
Let $p_1,\ldots,p_{2n}$ be the resulting sequence, and for $1\le i\le j\le 2n$
define $P(i,j)=\{p_i,\ldots,p_j\}$ and $Q(i,j)$ to be $P(S)\setminus P(i,j)$.

Consider any feasible pair $(\sigma_r, \sigma_b)$. 
Then there exist indices $1\le i\le j\le 2n$ such that $\sigma_r\cap P(S)=P(i,j)$. 
Consequently, $Q(i,j)\subset \sigma_b$. By Lemma~\ref{lem:feasible}, both $\sigma_r$ and $\sigma_b$ are feasible.

For each $(i,j)$, let $\sigma_r(i,j)$ be the minimum-width strip of orientation $\phi_r$ covering $P(i,j)$, and let
$\sigma_b(i,j)$ be the minimum-width \emph{feasible} vertical strip of $S$ that covers $Q(i,j)$. 
By the observation above, an optimal solution is attained by some pair $(\sigma_r(i,j),\sigma_b(i,j))$ with $\sigma_r(i,j)$ feasible for $S$.
Although there are $O(n^2)$ index pairs, we will show that it suffices to probe only $O(n)$ pairs to find an optimum.
Let
\begin{equation*}
\begin{aligned}
w_1(i,j) = w(\sigma_r(i,j))\quad \text{ and }\quad
w_2(i,j) =
\begin{cases}
w(\sigma_b(i,j)) & \text{if $\sigma_r(i,j)$ is a feasible strip of $S$,}\\
\infty & \text{otherwise.}
\end{cases}
\end{aligned}
\end{equation*}

Then our goal is to find $(i,j)$ that minimizes $w(i,j)=\max\{w_1(i,j), w_2(i,j)\}$.
Observe that $w_1$ and $w_2$ are monotone, that is
$w_1(i,j) \le w_1(i,j+1)$ and $w_1(i,j) \le w_1(i-1,j)$,
and
$w_2(i,j) \ge w_2(i,j+1)$ and $w_2(i,j) \ge w_2(i-1,j)$.
For each $i$, let $j_i$ be the largest index of $j$ satisfying $w_1(i,j)\le w_2(i,j)$.
By monotonicity, there exists an index pair minimizing $w(i,j)$ with $j\in\{j_i, j_i+1\}$.

We compute $j_i$ for $i=1,\ldots, 2n$ as follows.
First, set $j=1$ and increase $j$ until $w_1(1,j)\le w_2(1,j)$ and $w_1(1,j+1) \ge w_2(1,j+1)$, thereby obtaining $j_1$.
For each $i=2,\ldots, 2n$, starting from $j=j_{i-1}$, repeatedly increment $j$ until either $j=2n$, or $w_1(i,j)\le w_2(i,j)$ and $w_1(i,j+1) \ge w_2(i,j+1)$ hold. Since $j$ only increases over all iterations, the total number of probed index pairs is $O(n)$.
For every index pair $(i,j)$, we can compute $w_1(i,j)$ in $O(1)$ time and test whether $\sigma_r(i,j)$ is a feasible strip for $S$ in $O(1)$ time.
The feasibility test is supported by maintaining, for each pair in $S$, the number of its points covered by $\sigma_r(i,j)$.

To compute $w_2(i,j)$ efficiently, we determine $\sigma_b(i,j)$.
As we increment $i$ or $j$, 
we maintain the points $p_{\min}(i,j)$ and $p_{\max}(i,j)$ with minimum and maximum $x$-coordinates in $Q(i,j)$, respectively.
When $i$ or $j$ increases, exactly one point is inserted into or deleted from $Q(i,j)$; 
hence we can update $p_{\min}(i,j)$ and $p_{\max}(i,j)$ in $O(\log n)$ time.
Since $\sigma_b(i,j)$ covers $Q(i,j)$, it contains both $p_{\min}(i,j)$ and $p_{\max}(i,j)$.
We construct the following data structure.

\begin{lemma}\label{lem:ds}
We can build a data structure of $S$ in $O(n\log n)$ time that, for any two points $p,q\in P(S)$, 
reports in $O(1)$ time a minimum-width feasible vertical strip covering $p$ and $q$.
\end{lemma}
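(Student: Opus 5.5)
The plan is to reduce the query to a one-dimensional problem on $x$-coordinates and answer it with a range-minimum query over a precomputed monotone function. A vertical strip is an $x$-interval $[a,b]$. For each pair $(p_k,q_k)\in S$ write $u_k\le v_k$ for the $x$-coordinates of its two points. The interval $[a,b]$ is feasible iff for every $k$ we have $u_k\in[a,b]$ or $v_k\in[a,b]$. A query with $p,q$ asks for the shortest feasible interval containing $[L,R]$, where $L=\min\{x(p),x(q)\}$ and $R=\max\{x(p),x(q)\}$. As noted in the preliminaries, we may assume both endpoints of an optimal interval are $x$-coordinates of input points.

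\textbf{Step 1 (a monotone function $h$).} Sort the $2n$ $x$-coordinates as $x_1\le\dots\le x_{2n}$ in $O(n\log n)$ time, breaking ties consistently. For a left endpoint $x_s$, let $h(s)$ be the minimum right endpoint $b$ such that $[x_s,b]$ is feasible. Then
\[
h(s)=\infty \ \text{ if some } v_k<x_s,\qquad\text{otherwise}\qquad h(s)=\max_k \begin{cases} u_k & \text{if } u_k\ge x_s,\\ v_k & \text{if } u_k<x_s.\end{cases}
\]
The reason is that a pair with $u_k<x_s$ can only be hit by $v_k$, and any other pair is hit as soon as the interval reaches $u_k$. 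As $s$ increases, terms only switch from $u_k$ to $v_k\ge u_k$, and more pairs may become impossible. Hence $h$ is nondecreasing. All values $h(1),\dots,h(2n)$ can be computed by a single left-to-right sweep that maintains the two maxima in $O(n\log n)$ time, or even $O(n)$ after sorting.

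\textbf{Step 2 (reducing a query to a range minimum).} Let $\lambda$ and $\rho$ be the indices of $L$ and $R$ in the sorted order. An optimal interval containing $[L,R]$ with left endpoint $x_s$, $s\le\lambda$, has right endpoint $\max\{R,h(s)\}$. Its width is therefore $R-x_s$ if $h(s)\le R$, and $h(s)-x_s$ otherwise. Because $h$ is monotone, the indices $s$ with $h(s)>R$ form a suffix $\{t(\rho),\dots,2n\}$, where $t(\rho)$ is the first index with $h>x_\rho$. The thresholds $t(r)$ for all $r$ can be precomputed by a two-pointer scan in $O(n)$ time, since $t(r)$ is nondecreasing in $r$. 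The query answer is the better of two candidates. The first candidate is $R-x_{s_1}$ with $s_1=\min\{\lambda,t(\rho)-1\}$, if $s_1\ge1$. The second is $\min_{t(\rho)\le s\le\lambda}\bigl(h(s)-x_s\bigr)$, if this range is nonempty. The second candidate is a range-minimum query on the array $h(s)-x_s$, which a standard RMQ structure answers in $O(1)$ time after $O(n)$ preprocessing. The strip itself is recovered by storing the argmin index together with $h$.

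\textbf{Main obstacle.} The main obstacle is the correctness of restricting the left endpoint to $s\le\lambda$ and using the split by $h(s)\le R$. Care is needed with tied $x$-coordinates and with $p,q$ being twins or coinciding with a defining point. I would handle this by fixing a total order on the points, so that indices rather than coordinates are compared, and checking that monotonicity of $h$ and of $t$ survives the tie-breaking.
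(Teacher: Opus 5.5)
Your proposal is correct and follows essentially the same route as the paper: your function $h$ is the lower boundary of the paper's feasible region in the $(a,b)$ parameter plane (the upper envelope of the two-step staircases), and the query is answered, as in the paper, by a boundary candidate whose right endpoint is $\max\{x(p),x(q)\}$ together with an RMQ on $b-a$ over the staircase corners. The only difference is presentational: your threshold array $t$ carries the same information as the paper's right-to-left sweep, and splitting at $t(\rho)$ states the RMQ index range $[t(\rho),\lambda]$ more explicitly than the paper's description does.
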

\begin{proof}
Let $\sigma$ be the vertical strip bounded by lines $x=a$ and $x=b$ with $a<b$, and
represent it by $(a,b)$ in a two-dimensional parameter plane.
Fix a pair $(r,s)\in S$ with $x(r)\le x(s)$.
The strip $\sigma$ covers $r$ or $s$ iff either (i) $a\le x(r)$ and $b\ge x(r)$, 
or (ii) $x(r)<a\le x(s)$ and $b\ge x(s)$.
Equivalently, $(a, b)$ lies in $(-\infty,x(r)]\times [x(r),\infty)\cup (x(r),x(s)]\times [x(s),\infty)$
whose lower boundary is a two-step staircase. See Figure~\ref{fig:ds}(a).

Let $R$ be the intersection of the regions over all pairs in $S$.
Then $\sigma$ is feasible iff $(a,b)\in R$.
Each constraint contributes a two-step staircase, so the lower boundary of $R$ is the upper envelope of $n$ such staircases; 
it has complexity $O(n)$ and can be built in $O(n\log n)$ time.

Now take query points $p,q$ with $x(p)\le x(q)$.
The strip $\sigma$ contains $p$ and $q$ iff 
$(a,b)\in (-\infty,x(p)]\times [x(q),\infty)$.
The strip width is $b-a$, so we minimize it over 
$R\cap (-\infty,x(p)]\times [x(q),\infty)$. 
This minimizing point is attained either: (1) on $y=x(q)$ or $x=x(p)$, or 
(2) a convex vertex of $R$. See Figure~\ref{fig:ds}(b).

Case (1) corresponds to the minimum feasible vertical strip whose left bounding line contains $p$ or 
whose right bounding line contains $q$.
Sort the points of $P(S)$ by their $x$-coordinates as $q_1,\ldots,q_{2n}$, and store for each $q_i$
the index of its twin.
Then, sweep left-to-right with two pointers (left boundary at $q_i$, right boundary moving right only) to compute, 
for every $q_i$, the minimum feasible strip and record its width and right boundary.
Since both boundaries move right only, this takes $O(n)$ time after sorting.
A symmetric right-to-left sweep handles strips whose right boundary passes through a given point.
Thus, for query $(p,q)$ we obtain in $O(1)$ time the best strip for Case (1) (taking $\max\{x(q),b\}$ if needed).

For Case (2), we use the range minimum query data structure (RMQ)~\cite{Dov1984}. 
Let $r_1,\ldots,r_k$ be the convex vertices of $R$ in increasing $x$-order, where $r_i=(x_i,y_i)$.
Define $\mathcal{R}[i]=y_i-x_i$ and build an RMQ on $\mathcal{R}$ to answer
$\arg\min_{l\le m\le r}\mathcal{R}[m]$ in $O(1)$ time after $O(k)$ preprocessing~\cite{Dov1984}.
For each $q_j$, store the maximum index $i$ with $x_i\le x(q_j)$ and the minimum index $i'$ with $x(q_j)\le x_{i'}$,
computed in total $O(n)$ time by a linear merge-sweep of sorted lists $(q_j)$ and $(r_i)$.
Given $(p, q)$, we find the minimum index $i$ with $x(p)\ge x_i$ and the maximum index $i'$ with $x(q)\le x_{i'}$ 
from the stored indices in $O(1)$ time.
Then by querying $(i,i')$ to the RMQ, we find the vertex for Case (2) and the corresponding strip.

Finally, comparing the widths from Cases (1) and (2) yields a minimum-width feasible vertical strip covering 
$p$ and $q$ in $O(1)$ time.
\end{proof}

Hence, we can compute $w_2(i,j)$ in $O(1)$ time for each index pair $(i,j)$ using Lemma~\ref{lem:ds}, after we compute $p_{\min}(i,j)$ and $p_{\max}(i,j)$.
Since we test $O(n)$ index pairs,
we can solve the $\mathsf{2O_2}$ problem in $O(n\log n)$ time.
Now we show that this running time is optimal.

\begin{figure}[ht]%
  \centering
\includegraphics[width=.85\textwidth]{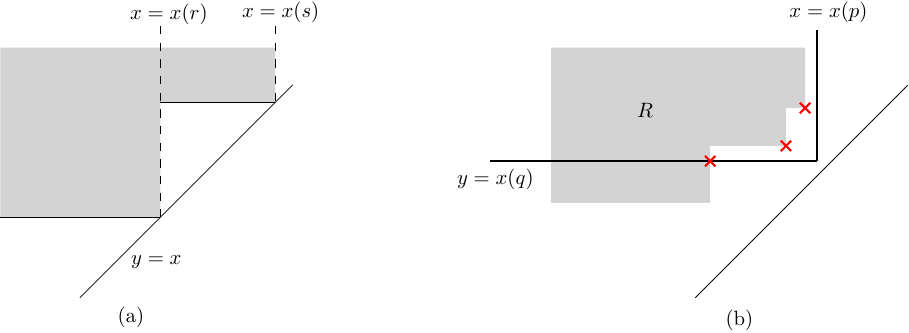}%
  \caption{
(a) The region corresponding to $(r,s)$ with $x(r)\le x(s)$.
(b) Region $R$. Query induced by $[x(p),x(q)]$ and three candidate points, marked by crosses.}
\label{fig:ds}%
\end{figure}

\begin{theorem}\label{thm:2O2}
  Given $n$ point pairs in $\mathbb{R}^2$, we can solve the $\mathsf{2O_2}$ problem in $\Theta(n\log n)$ time.
\end{theorem}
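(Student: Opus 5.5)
The proof has two halves: an $O(n\log n)$ upper bound, assembled from the algorithm described just before the statement, and a matching $\Omega(n\log n)$ lower bound in the algebraic decision-tree model.

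\emph{Upper bound.} First we sort $P(S)$ in the direction orthogonal to $\phi_r$, which takes $O(n\log n)$ time. We then build the structure of Lemma~\ref{lem:ds}, also in $O(n\log n)$ time. Next we run the two-pointer scan over index pairs $(i,j)$. Correctness of the scan rests on the monotonicity of $w_1$ and $w_2$. Because $w_1$ is nondecreasing in $j$ and $w_2$ is nonincreasing in $j$, for each fixed $i$ the function $\max\{w_1,w_2\}$ is minimized at $j_i$ or $j_i+1$. Because both functions are monotone in $i$ in the opposite sense, $j_i$ is nondecreasing in $i$, so the pointer $j$ never needs to move back and only $O(n)$ pairs are probed. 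For each probed pair we do the following: update the per-pair coverage counts of $\sigma_r(i,j)$ in $O(1)$ time; update $p_{\min}(i,j)$ and $p_{\max}(i,j)$ in $O(\log n)$ time with a balanced search tree (or a heap with lazy deletion); and query Lemma~\ref{lem:ds} with these two points in $O(1)$ time. By Lemma~\ref{lem:feasible}, every feasible solution has the form $(\sigma_r(i,j),\sigma_b(i,j))$ with $\sigma_r(i,j)$ feasible, up to shrinking the strips. Hence the minimum over the probed pairs is the optimum, and the total running time is $O(n\log n)$.

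\emph{Lower bound.} We reduce from a problem with a known $\Omega(n\log n)$ bound. Reusing the one-line-center reduction from Theorem~\ref{thm:parallel} does not work here, because the orientations are fixed. Instead we take a fixed-orientation problem with an $\Omega(n\log n)$ bound, namely the two-line-center problem with prescribed orientations, which corresponds to the prescribed-orientation result of Ahn and Bae~\cite{Ahn2026}. The reduction encodes each point $p$ as a pair of coincident points at $p$. Then $R=B=P$, and the problem degenerates to covering $P$ with a single strip in each orientation, which is trivial and therefore useless. So we will instead pair each point $p$ with a far-away dummy point placed in a position that covers nothing useful. Concretely, take $\phi_r$ horizontal and $\phi_b$ vertical. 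Given an instance $P$ of the two-strip cover with one horizontal and one vertical strip, we pair each $p\in P$ with a single common point $z$ placed far away. Then $z$ lies in whichever strip it is assigned to, which forces that strip to be huge unless coincident copies of $z$ are split between the colours. We adjust the pairing until the optimum of the constructed instance equals the cover optimum of $P$.

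The main obstacle is choosing a source problem whose $\Omega(n\log n)$ lower bound is actually established. My first choice is element uniqueness, or the $\epsilon$-closeness problem, reduced to the decision version. Place the points on the $x$-axis and pair $x_i$ with $(x_i,M)$ for a large $M$. With fixed vertical and horizontal strips, the horizontal strip must then take one entire row and the vertical strip everything else. I would then look for a configuration in which the vertical optimum width is zero exactly when some pair of points coincides. If this cannot be made to work, I would cite the $\Omega(n\log n)$ bound for two fixed-orientation strips from~\cite{Ahn2026} or~\cite{Chung2026} and embed it via coincident pairs for the points that the far dummy forces into both strips.
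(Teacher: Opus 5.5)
Your upper-bound half matches the paper's algorithm: sorting orthogonally to $\phi_r$, the two-pointer scan justified by the monotonicity of $w_1$ and $w_2$, $O(\log n)$ maintenance of $p_{\min},p_{\max}$, and $O(1)$ queries to Lemma~\ref{lem:ds}. Your argument that $j_i$ is nondecreasing in $i$ is also correct.

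The genuine gap is the lower bound. No reduction is actually carried out, and each one you analyze degenerates to an $O(n)$-computable quantity.
\begin{itemize}
\item With coincident pairs, both strips must cover all of $P$, so the optimum is just the larger of the two extents.
\item Pairing every $p$ with copies of a far point $z$ fails the same way. If all copies of $z$ get one colour, the other strip must contain every $p$, and the optimum is an extent of $P$. If the copies are split, both strips must contain $z$.
\item Your element-uniqueness instance with pairs $((x_i,0),(x_i,M))$ has optimum equal to the $x$-extent of the $x_i$. Both points of a pair share an $x$-coordinate, and the vertical strip must meet every pair.
\end{itemize}
No concrete ``adjustment of the pairing'' recovers a two-strip cover optimum. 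So falling back on a monochromatic lower bound from \cite{Ahn2026} or \cite{Chung2026} does not close the gap: you would still have no embedding of that problem into pairs.

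The missing idea is to make the red strip free and turn feasibility of the blue strip into a wrap-around covering problem, reducing from Max-Gap~\cite{Lee1986}. Let $L=\max X-\min X$ and $x_n=\max X$, and index $X$ in sorted order for the analysis only. Place all points on the $x$-axis: pairs $p_i=(x_i,0)$, $q_i=(x_i+L,0)$ for $i<n$, plus a coincident pair at $(x_n,0)$.
\begin{itemize}
\item A horizontal red strip of width $0$ covers $P(S)$ and is feasible. By Lemma~\ref{lem:feasible}, the problem is therefore exactly finding a minimum-width feasible vertical strip.
\item Such a strip must contain $x_n$. All points left of $x_n$ are $p$-points, so its left boundary may be taken at some $x_k$.
\item It must then also contain $q_1,\ldots,q_{k-1}$. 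Hence the narrowest such strip has width $L-(x_k-x_{k-1})$, and width $L$ for $k=1$.
\end{itemize}
The optimum is thus $L-\Delta$, which yields the maximum gap $\Delta$. The construction takes $O(n)$ time, so the $\Omega(n\log n)$ algebraic decision-tree bound for Max-Gap transfers to $\mathsf{2O_2}$.
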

\begin{proof}
  We show the lower bound by reduction from the Max-Gap problem, which 
  consists of finding, for a given set of real numbers, the maximum difference between two consecutive elements in sorted order.
  It has an $\Omega(n\log n)$ lower bound in the algebraic decision-tree model~\cite{Lee1986}.

Consider an instance $X=\{x_1,\ldots,x_n\}$ of Max-Gap with $x_i<x_{i+1}$ for $i=1,\ldots, n-1$.
Let $\Delta=\max_{2\le i\le n}(x_i-x_{i-1})$, and let $L=x_n-x_1$.
For each $x_i\in X\setminus\{x_n\}$, create an input pair $(p_i,q_i)$ with $p_i=(x_i,0)$ and $q_i=(x_i+L,0)$.
For $x_n$, create an input pair $(p_n,q_n)$ with $p_n=(x_n,0)$ and $q_n=(x_n,0)$.
Let $S=\{(p_i,q_i)\mid 1\le i\le n\}$, obtained in $O(n)$ time.

We solve the $\mathsf{2O_2}$ problem on $S$, where the red strip is horizontal and the blue strip is vertical.
Since all points in $P(S)$ lie on the $x$-axis, a red strip of width $0$ can cover all red points for any assignment.
Thus, it reduces to finding a minimum-width feasible vertical strip.

We show that an optimal feasible vertical strip has width $L-\Delta$. 
Let $\sigma_i$ be the minimum-width feasible vertical strip for $S$ whose left boundary passes through $p_i$.
Then $\sigma_1$ must cover all points $p_j$ for $1\le j\le n$, and thus the width of $\sigma_1$ is $L$.
For $i>1$, $\sigma_i$ must cover $q_j$ for every $j<i$ as it cannot cover $p_j$. 
In particular, $\sigma_i$ must cover  $q_{i-1}$.
Thus, $\sigma_i$ has width $L-(x_i-x_{i-1})$.
Since an optimal feasible vertical strip is one with minimum width among $\sigma_1,\ldots,\sigma_n$, 
its width is $L-\Delta$. Thus, the $\mathsf{2O_2}$ problem has an $\Omega(n\log n)$ lower bound.
\end{proof}

\section{Two line-centers with one prescribed orientation}\label{sec:2O1}
We study the two line-center problem for the case that one strip orientation is prescribed, which
we denote by $\mathsf{2O_1}$. 
Without loss of generality, assume that the red strip is prescribed 
to be horizontal. Let $(\sigma_r,\sigma_b)$ be an optimal solution for $S$ such that
$\sigma_r$ is horizontal and there is no solution with
$(\sigma_r,\sigma_b')$ with $w(\sigma_b')<w(\sigma_b)$.

For a feasible pair of strips $(\sigma_1, \sigma_2)$,
we say $\sigma_1$ \emph{captures} $\sigma_2$ if a bounding line $\ell$ of $\sigma_2$ satisfies
$\ell\cap P(S)\subset\sigma_1$. In this case, we denote the bounding line by $\cl(\sigma_2)$.
We consider two cases depending on whether $\sigma_r$ captures $\sigma_b$ or not.

\begin{figure}[ht]%
  \centering
{\includegraphics[width=\textwidth]{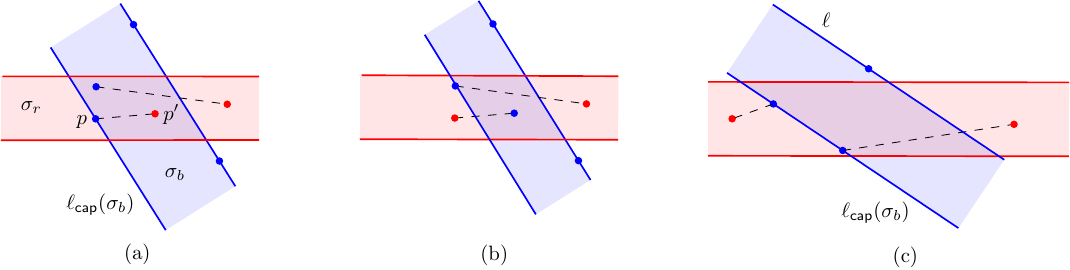}}%
  \caption{
  (a) $\sigma_r$ captures $\sigma_b$. $p$ has its twin $p'$ in the interior of $\sigma_b$.
  (b) A new bichromatic assignment reduces the width of $\sigma_b$.
  (c) $\ell$ contains one point, and $\cl(\sigma_b)$ contains two points whose 
  twins are not contained in the interior of $\sigma_b$.
  }
  \label{fig:captured_prop}%
\end{figure}

\subsection{When \texorpdfstring{$\sigma_r$}{σr} captures \texorpdfstring{$\sigma_b$}{σb}}
Assume that $\sigma_r$ captures $\sigma_b$. We show that there are $O(n^2)$ candidates for $\sigma_b$.
\begin{lemma}\label{lem:capturing_pts}
Suppose $\sigma_r$ captures $\sigma_b$.
Then there exists a point on $\cl(\sigma_b)$ whose twin is not contained in the interior of $\sigma_b$.
Moreover, at least one of the following holds:
\begin{enumerate}
  \item The other bounding line of $\sigma_b$ contains at least two points of $P(S)$.
  \item There is another point on $\cl(\sigma_b)$ whose twin is not contained in the interior of $\sigma_b$.
\end{enumerate}
\end{lemma}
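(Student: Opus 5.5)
The plan is to argue by contradiction against the minimality of $\sigma_b$. Recall that $(\sigma_r,\sigma_b)$ is optimal, $\sigma_r$ is horizontal, and no feasible pair $(\sigma_r,\sigma_b')$ has $w(\sigma_b')<w(\sigma_b)$. In each case I will perturb $\sigma_b$ slightly to a strictly narrower strip $\sigma_b'$. I will then certify that $(\sigma_r,\sigma_b')$ is still feasible using Lemma~\ref{lem:feasible}, which needs three things:
\begin{itemize}
\item $\sigma_r$ is feasible, and it is unchanged;
\item $P(S)\subset\sigma_r\cup\sigma_b'$;
\item $\sigma_b'$ is feasible.
\end{itemize}
Write $\ell=\cl(\sigma_b)$ and let $\ell'$ be the other bounding line. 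Since $\sigma_r$ captures $\sigma_b$, every point of $\ell\cap P(S)$ lies in $\sigma_r$. Any point of $P(S)$ that drops out of the blue strip is therefore still covered, provided it lies on $\ell$.

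\emph{First claim.} Suppose every point on $\ell$ has its twin in the interior of $\sigma_b$. Translate $\ell$ towards $\ell'$ by a small $\varepsilon>0$, smaller than the distance from $\ell$ to any point of $P(S)$ strictly inside $\sigma_b$, and call the result $\sigma_b'$.
\begin{itemize}
\item Points that leave the blue strip lie on $\ell$, so they are in $\sigma_r$. Every other point of $\sigma_b$ stays in $\sigma_b'$, so $P(S)\subset\sigma_r\cup\sigma_b'$.
\item For feasibility of $\sigma_b'$, consider any pair. If it has a point on $\ell$, its twin is in the open interior of $\sigma_b$ and hence in $\sigma_b'$; note that both points of a pair cannot lie on $\ell$, since then a twin would not be interior. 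If it has no point on $\ell$, the point it had in $\sigma_b$ is still in $\sigma_b'$.
\end{itemize}
So $(\sigma_r,\sigma_b')$ is feasible and strictly narrower, a contradiction. This gives a point $p\in\ell$ whose twin is not in the interior of $\sigma_b$.

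\emph{Second claim.} Suppose both alternatives fail. Then $p$ is the only point on $\ell$ whose twin is not interior, and $\ell'$ contains at most one point of $P(S)$. By the defining-point convention it contains exactly one point, $p'$. If $p=p'$ coincide, the width is $0$ and there is nothing to prove.

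Otherwise, rotate $\sigma_b$ by a small angle $\delta$ while keeping its bounding lines through $p$ and $p'$. The width becomes $|pp'|\,|\cos(\beta\pm\delta)|$, where $\beta$ is the angle between $pp'$ and the normal of the strip.
\begin{itemize}
\item If $\beta\neq 0$, rotating in the appropriate direction strictly decreases the width.
\item If $\beta=0$, the width is at its maximum $|pp'|$, and rotation in either direction strictly decreases it.
\end{itemize}
For $\delta$ small enough, all points strictly inside $\sigma_b$ remain inside, and $p,p'$ remain on the boundary. Only the other points of $\ell$ and of $\ell'$ can leave, and $\ell'$ has none besides $p'$. The points of $\ell$ that leave are in $\sigma_r$ and have interior twins, which stay in the rotated strip. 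The feasibility check of the first claim then applies verbatim, again contradicting minimality.

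\emph{Main obstacle.} The delicate step is the second claim: choosing a perturbation that provably decreases the width. This includes the case where $pp'$ is orthogonal to $\sigma_b$, handled by the fact that the width $|pp'|\,|\cos(\beta\pm\delta)|$ is maximal at $\beta=0$. One must also check that only points of $\ell$ are released, which requires $\ell'$ to carry only $p'$, and that degenerate coincidences, such as twins lying on $\ell'$ or on $\ell$ itself, are excluded by choosing $\delta$ small.
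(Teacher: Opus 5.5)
Your proof is correct and follows the paper's argument: you contradict the minimality of $\sigma_b$ by translating $\cl(\sigma_b)$ inward, and in the second part by rotating $\sigma_b$ about the two remaining contact points. The difference is bookkeeping: you certify feasibility of $(\sigma_r,\sigma_b')$ directly through Lemma~\ref{lem:feasible} rather than by recoloring (swapping blue points on $\cl(\sigma_b)$ with their twins), which lets you rotate in every case instead of splitting on the color of the single point on the other bounding line as the paper does. One small correction: when $p=p'$ it is not that there is nothing to prove; for $n\ge 2$ this case cannot occur, since then $\sigma_b\cap P(S)=\{p\}$ would contradict the feasibility of $\sigma_b$ (the paper's rotation step also silently skips this degenerate case).
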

\begin{proof}
Assume that every point in $P(S)\cap \cl(\sigma_b)$ has its twin in the interior of $\sigma_b$.
Let $(R,B)$ be a bichromatic assignment of $S$ such that $R\subset\sigma_r$ and $B\subset\sigma_b$.
Construct $(R',B')$ by swapping the colors of each blue point in $B\cap \cl(\sigma_b)$ and its twin.
Since $\sigma_r$ captures $\sigma_b$, the pair $(\sigma_r,\sigma_b)$ still covers $(R',B')$.

After the swap, no blue point lies on $\cl(\sigma_b)$. Hence we can translate $\cl(\sigma_b)$ inward,
strictly decreasing the width of $\sigma_b$ while keeping all blue points covered, contradicting the minimality of $\sigma_b$.
Therefore, there exists a point $p\in P(S)\cap \cl(\sigma_b)$ whose twin is not contained in the interior of $\sigma_b$.
(See Figure~\ref{fig:captured_prop}(a)--(b).)

Let $\ell$ be the bounding line of $\sigma_b$ other than $\cl(\sigma_b)$.
If $\ell$ contains at least two points of $P(S)$, then (1) holds. Otherwise, $\ell$ contains exactly one point $q\in P(S)$.
Assume for contradiction that $p$ is the only point on $\cl(\sigma_b)$ 
whose twin is not contained in the interior of $\sigma_b$.
By the same swapping argument as above, there is an assignment $(R',B')$ with
$R'\subset\sigma_r$, $B'\subset\sigma_b$, and $p$ is the only blue point on $\cl(\sigma_b)$.

If $q$ is red, then we can translate $\ell$ inward to strictly reduce the width of $\sigma_b$ while keeping $B'\subset\sigma_b$,
a contradiction.
If $q$ is blue, then $\sigma_b$ has exactly one blue point on each bounding line.
In this case, we can rotate $\sigma_b$ while keeping $\cl(\sigma_b)$ passing through $p$ and $\ell$ passing through $q$
so that $B'\subset\sigma_b$ and the width decreases, contradicting minimality~\cite{Houle1988}.
(See Figure~\ref{fig:captured_prop}(c).)
Thus, (2) holds when $\ell$ contains exactly one point of $P(S)$.
\end{proof}

Let $\Sigma$ be the set of feasible strips $\sigma_b$ satisfying the conclusion of Lemma~\ref{lem:capturing_pts}.
\begin{lemma}~\label{lem:capturecompute}
    $|\Sigma|=O(n^2)$, and we can compute $\Sigma$ in $O(n^2)$ time.
\end{lemma}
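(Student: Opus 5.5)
The approach is to work in the dual plane of Section~\ref{sec:1U} and charge every strip of $\Sigma$ to a vertex of $\arr$, that is, to a line spanned by two points of $P(S)$. The strips in $\Sigma$ arise as in Lemma~\ref{lem:capturing_pts}: feasible strips whose bounding lines pass through input points and whose width cannot be reduced while one prescribed bounding line is kept fixed. In both alternatives of Lemma~\ref{lem:capturing_pts}, some bounding line of $\sigma_b$ contains at least two points of $P(S)$.
\begin{itemize}
\item In alternative (1), this is the bounding line other than $\cl(\sigma_b)$.
\item In alternative (2), it is $\cl(\sigma_b)$ itself, which passes through $p$ and through the second point $p'$.
\end{itemize}
Such a line $\lambda$ dualizes to a vertex $\lambda^*$ of $\arr$. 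Then $\sigma_b^*$ is a vertical segment with one endpoint at $\lambda^*$, extending either upward or downward.

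The key step is a uniqueness claim: once $\lambda$ and the side of $\lambda$ on which $\sigma_b$ lies are fixed, $\sigma_b$ is determined. The orientation is fixed by $\lambda$. Feasibility of a strip with one boundary at $\lambda$ is monotone in its width, by Observation~\ref{obs:dual_strip}, because extending the dual segment only adds intersected lines. Hence the minimality built into $\Sigma$ forces the other boundary to be the nearest parallel line at which the strip becomes feasible. That line must pass through an input point, so it is a defining line.

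It remains to check that the twin conditions of Lemma~\ref{lem:capturing_pts} do not create further freedom. They only ask whether the twin of a point on $\cl(\sigma_b)$ lies in the interior. Once both bounding lines are fixed, this is a yes/no property of the strip, so it filters candidates and never multiplies them. This gives at most two strips per vertex of $\arr$, and $\arr$ has $O(n^2)$ vertices, so $|\Sigma|=O(n^2)$.

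For the computation, reuse the functions $U_i$ of Section~\ref{sec:1U} and their downward analogues $D_i$. Take a vertex $v$ of $\arr$ whose lowest incident line is at level $i$. The minimal feasible segment going upward from $v$ has upper endpoint $U_i(x(v))$; the downward one is obtained symmetrically by reflecting the arrangement.

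By Lemma~\ref{lem:Ui-linear}, all $U_i$ (and likewise all $D_i$) have total complexity $O(n^2)$. Every vertex of $\arr$ lies on the levels through which it passes. So one simultaneous left-to-right merge of each level $\ilev{i}$ with $U_i$ evaluates $U_i$ at all vertices of $\ilev{i}$ in time linear in the total complexity, which is $O(n^2)$ overall.

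For each resulting candidate, we also record the defining points on each boundary and test the twin conditions. Each test checks a twin against the two bounding lines, so it is $O(1)$ per point. Vertices of $\arr$ can carry many lines only in degenerate position; in that case the total number of point incidences is still $O(n^2)$.

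The main obstacle is the uniqueness step. One must make sure that the elements of $\Sigma$ really are minimal with respect to the spanned line $\lambda$, rather than arbitrary strips that happen to satisfy the twin conditions. I would argue this from the minimality assumption on $\sigma_b$ under which Lemma~\ref{lem:capturing_pts} is derived. I would handle degeneracies, such as several points on $\lambda$, by charging each strip to a single canonical pair of points on $\lambda$.
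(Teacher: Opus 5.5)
Your handling of alternative~(1) is fine, but the uniqueness step fails in alternative~(2), and that is exactly where the paper argues differently.

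The minimality a strip of $\Sigma$ carries is one-sided. It is tightness of $\cl(\sigma_b)$ with the opposite bounding line $\ell'$ held fixed; this is how the paper puts the $\cl$-side endpoint of $\sigma_b^*$ on the graph of $U_r$, for $r\in\ell'$. Nothing forces $\ell'$ to be tight with $\cl(\sigma_b)$ held fixed. The point $r$ on $\ell'$ may be forced into $\sigma_b$ only because $r\notin\sigma_r$, while its twin lies inside $\sigma_b$; then the single strip stays feasible when $\ell'$ moves inward. In alternative~(1) your spanned line $\lambda$ is $\ell'$, so you minimize in the correct direction. In alternative~(2) $\lambda=\cl(\sigma_b)$, and you minimize in the wrong one.

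Here is a concrete instance, with the red strip horizontal. Take the pairs $(p,p')$, $(q,q')$, $(r,r')$ with $p=(0,0.5)$, $p'=(0,0.1)$, $q=(10,0.6)$, $q'=(10,0.1)$, $r=(5,1.5)$, $r'=(5,0.9)$.
\begin{itemize}
\item The unique optimal assignment makes $p,q,r$ blue. Then $\sigma_b$ is bounded by the line $pq$ and its parallel through $r$, with width $\approx 0.95$.
\item The red strip $0.1\le y\le 0.9$ captures $\sigma_b$ via line $pq$.
\item Only alternative~(2) applies, since the line through $r$ contains no other input point.
\item Your candidate attached to the vertex dual to line $pq$ is the strip of width $\approx 0.35$ that stops at $r'$. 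It misses $r$, so any horizontal strip paired with it must contain $p',q',r$ and has width at least $1.4$.
\end{itemize}
Hence $\sigma_b$ is not among your candidates. For the same reason, one vertex $p^*\cap q^*$ can be the $\cl$-side endpoint of $\Theta(n)$ strips of $\Sigma$, one for each admissible lower line $r^*$. So ``two strips per vertex of $\arr$'' is not the right count.

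What is missing is the paper's charging for alternative~(2), which fixes the point $r$ on $\ell'$ rather than the line through $p$ and $q$:
\begin{itemize}
\item The $\cl$-side endpoint of $\sigma_b^*$ lies on the graph of $U_r$ (or its downward analogue).
\item $p^*$ and $q^*$ are both binding there, because the twins of $p$ and $q$ are not in the interior of $\sigma_b$, and the two lines cross at that endpoint. So the endpoint is a breakpoint of $U_r$.
\item By the proof of Lemma~\ref{lem:Ui-linear}, $|U_r|=O(n)$. This gives $O(n)$ strips per $r$ and $O(n^2)$ in total.
\end{itemize}
Correspondingly, the algorithm must enumerate the breakpoints of all $U_i$, whose total complexity is $O(n^2)$. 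Evaluating $U_i$ and $D_i$ only at the vertices of $\arr$ is not enough.
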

\begin{proof}
  Consider a strip $\sigma\in\Sigma$.
  Let $p\in \cl(\sigma)$ whose twin $p'$ is not in the interior of $\sigma$, 
  and let $\ell'$ be the other bounding line. 
  Assume that $\cl(\sigma)$ is the lower bounding line of $\sigma$.
  Pick a point $r\in P(S)$ on $\ell'$.
  In the dual plane, the lower endpoint of $\sigma^*$ lies on $r^*$.
  Recall $U_r$ from the proof of Lemma~\ref{lem:Ui-linear}.
  Since $\sigma^*$ is feasible, its upper endpoint lies on or above the function graph of $U_r$.
  Moreover, since $\sigma^*$ does not intersect the dual line of $p'$, and by minimality
  it is the shortest feasible vertical segment with lower endpoint on $r^*$.
  Hence the upper endpoint of $\sigma^*$ lies on the graph of $U_r$.
  
  Consider the case that $\sigma$ satisfies Condition (1) of Lemma~\ref{lem:capturing_pts}.
  Then $\ell'$ contains at least two points of $P(S)$, so the lower endpoint of $\sigma^*$ is a vertex of the arrangement $\arr$.
  Thus, there are $O(n^2)$ such strips.

  Consider the case that $\sigma$ satisfies Condition (2) of Lemma~\ref{lem:capturing_pts}.
  Let $q$ be another point on $\cl(\sigma)$ whose twin is not in the interior of $\sigma$.
  Then the upper endpoint of $\sigma^*$ is determined by the lines $p^*$ and $q^*$ in $L(S^*)$, 
  and the twin lines of $q^*$ and $r^*$ do not intersect $\sigma^*$.
  Therefore, the upper endpoint of $\sigma^*$ is a vertex on the graph of $U_r$.
  Since $U_r$ has complexity $O(n)$, there are $O(n)$ such strips for each $r$, and hence $O(n^2)$ in total.

Finally, we can compute all strips in $\Sigma$ in $O(n^2)$ time using Theorem~\ref{thm:one_strip},
by computing $U_i$ for all levels $i$ (instead of a fixed $U_r$).
The case where $\cl(\sigma)$ is the upper bounding line and $\ell'$ is the lower bounding line follows symmetrically.
\end{proof}

For each $\sigma\in\Sigma$ we need a minimum-width feasible horizontal strip 
that covers $P(S)\setminus\sigma$.
To this end, we first compute the topmost and bottommost points of $P(S)\setminus\sigma$.

\begin{lemma}\label{lem:extreme}
  The topmost and bottommost points in $P(S)\setminus \sigma$
  can be computed in $O(n^2)$ time for all strips $\sigma\in\Sigma$.
\end{lemma}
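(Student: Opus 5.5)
We need the topmost and bottommost points of $P(S)\setminus\sigma$ for all $O(n^2)$ strips $\sigma\in\Sigma$ in total $O(n^2)$ time, i.e., amortized $O(1)$ per strip. The plan is to exploit the structure from the proof of Lemma~\ref{lem:capturecompute}. Each $\sigma\in\Sigma$ is reported while tracing, at some $x$-coordinate $x_0$, a vertical segment $\sigma^*$ on the line $x=x_0$ in the dual plane. Its lower endpoint lies on a level $\ilev{i}$ and its upper endpoint lies on a level $\ilev{j}$, possibly at a vertex. By Observation~\ref{obs:dual_strip}, the points of $P(S)$ outside $\sigma$ are exactly the dual lines passing strictly below the lower endpoint or strictly above the upper endpoint at $x_0$. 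So $P(S)\setminus\sigma$ is determined by the vertical order of the $2n$ dual lines at $x_0$, together with the two level indices. Concretely, it is the set of lines at levels $<i$ plus the lines at levels $>j$; at a degenerate vertex we take the appropriate index offsets.

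First, I would sort all points of $P(S)$ by $y$-coordinate once, in $O(n\log n)$ time, and give each point its rank. The task then becomes: given the set of lines strictly below level $i$ at $x_0$ and the set strictly above level $j$ at $x_0$, find the minimum and maximum rank in their union. Equivalently, for each level $k$ and each face or edge $x$-interval of $\arrp$ along $\ilev{k}$, we want the prefix quantities $\mathrm{mn}_k$ and $\mathrm{mx}_k$ over lines below level $k$, and the suffix quantities over lines above level $k$.

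Second, I would store these prefix and suffix extrema on the edges of the levels and update them during a left-to-right sweep of the vertical line over $\arrp$. At a vertex of $\arr$ where two lines at adjacent levels $k,k+1$ swap, only the prefix value at level $k+1$ and the suffix value at level $k$ change. Each changes in $O(1)$: the prefix extremum below level $k+1$ equals the extremum of the prefix below $k$ combined with the single line now at level $k$. Storing $\mathrm{mn},\mathrm{mx}$ as values attached to each level edge and recomputing them from the neighbouring level, we get $O(1)$ work per vertex, so $O(n^2)$ in total. The refinement vertices of $\arrp$ do not change the order of lines and need no work.

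Third, when a strip $\sigma\in\Sigma$ is generated at $x_0$ from the computation of the $U_i$ (via Theorem~\ref{thm:one_strip} and Lemma~\ref{lem:capturecompute}), we know the edges of $\ilev{i}$ and $\ilev{j}$ containing its endpoints. We then read off the prefix extrema of $\ilev{i}$ and the suffix extrema of $\ilev{j}$ and combine them in $O(1)$ time to get the topmost and bottommost points. The main obstacle is that the strips of $\Sigma$ are produced level by level, not in sweep order. Storing per-edge values resolves this: the prefix and suffix extrema are constant on each edge of $\arrp$ and can be precomputed for all $O(n^2)$ edges before the $U_i$ are traced. The degenerate cases, where an endpoint is a vertex and several lines pass through it, need care: we must exclude exactly the lines through the endpoint, since those points lie on the boundary of $\sigma$ and are therefore inside it. Choosing the level index as the lowest line through the lower endpoint and the highest line through the upper endpoint handles this.
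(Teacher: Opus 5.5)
Your approach is the paper's. You store, for every edge of the arrangement, the topmost and bottommost points among the dual lines strictly below it and strictly above it. Each value comes in $O(1)$ time from the value on the adjacent level plus the single line occupying that level. For each $\sigma\in\Sigma$, the answer combines the ``below'' value at the lower endpoint with the ``above'' value at the upper endpoint. Your treatment of endpoints at vertices is also correct: you read the level-$k$ edge next to the vertex, where $k$ is the number of lines strictly below it. At a vertex through which $m>2$ lines pass, $m-1$ values change, so the work is $O(m)$ rather than $O(1)$; this still sums to $O(n^2)$.

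The step that fails as written is the evaluation order. You compute the values during a left-to-right sweep of a vertical line and charge $O(1)$ per vertex. That requires visiting the $\Theta(n^2)$ vertices of $\arr$ in $x$-order. Doing this with an event queue, or by sorting their $x$-coordinates, takes $O(n^2\log n)$ time, so your accounting does not give the claimed $O(n^2)$ bound.

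Your update rule is local, so no sweep is needed. Evaluate it level by level, as the paper does: trace $\ilev{0},\ilev{1},\ldots$ in turn, each left to right through the already-constructed arrangement. At a vertex of $\ilev{k+1}$, the ``below'' value is handled in one of two ways:
\begin{itemize}
\item If levels $k+1$ and $k+2$ swap there, copy it from the preceding edge of $\ilev{k+1}$.
\item If levels $k$ and $k+1$ swap there, recompute it from the edge of $\ilev{k}$ leaving that vertex, which the previous pass has already handled.
\end{itemize}
The ``above'' values are computed symmetrically from the top level down. This costs $O(\sum_k|\ilev{k}|)=O(n^2)$. A topological sweep (Edelsbrunner--Guibas) would also work. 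With either change, your proof is complete.
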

\begin{proof}
During the construction of the arrangement $\arr$, we store for each edge $e$ two pairs of points:
(i) the topmost and bottommost points of $P(S)$ whose dual lines lie strictly below $e$, and
(ii) the analogous pair for the dual lines strictly above $e$.
We compute the ``below'' information level by level, from $\ilev{0}$ to $\ilev{2n-1}$.
For edges on $\ilev{0}$, no line lies below, so the pair is undefined.
Assume the information is available for all edges of $\ilev{i}$.
Consider an edge $e'$ on $\ilev{i+1}$ and an adjacent edge $e$ on $\ilev{i}$ sharing an endpoint with $e'$.
For any point on $e'$, the set of lines below it equals the set of lines below $e$ plus the line supporting $e$.
Hence, the topmost/bottommost points for the corresponding portion of $e'$ are obtained by comparing the stored pair for $e$
with the point of $P(S)$ dual to the supporting line of $e$, using $O(1)$ comparisons.
Over all edges, this takes $O(n^2)$ time. The ``above'' information is computed symmetrically in $O(n^2)$ time.

Now consider any candidate strip examined between $\ilev{i}$ and $U_i$.
In the dual plane, it corresponds to a vertical segment whose lower and upper endpoints lie on edges 
$e_{\downarrow}$ and $e_{\uparrow}$ of $\arr$, respectively.
The points of $P(S)$ outside the strip correspond exactly to the lines below $e_{\downarrow}$ and the lines above $e_{\uparrow}$.
Therefore, the topmost (resp., bottommost) point in $P(S)\setminus\sigma$ is the better of the two stored topmost (resp., bottommost) candidates for $e_{\downarrow}$ and $e_{\uparrow}$.
Thus, after the $O(n^2)$ preprocessing, the extremes for each such $\sigma$ are obtained in $O(1)$ time.
\end{proof}

By Lemmas~\ref{lem:capturecompute} and~\ref{lem:extreme}, we can compute $\Sigma$ and, 
for every $\sigma\in\Sigma$, the topmost and bottommost points of $P(S)\setminus\sigma$ in $O(n^2)$ total time.
For each $\sigma\in\Sigma$, let $\sigma'$ be a minimum-width feasible horizontal strip covering $P(S)\setminus\sigma$, and consider the candidate pair $(\sigma',\sigma)$.
Since $\sigma'$ must contain the topmost and bottommost points of $P(S)\setminus\sigma$, we can adapt the data structure of Lemma~\ref{lem:ds} to horizontal strips and compute $\sigma'$ in $O(1)$ time per $\sigma$.
Therefore, we generate $O(n^2)$ candidate pairs and select the one of minimum width in $O(n^2)$ total time.

\begin{lemma}\label{lem:2O1c1}
Given $n$ point pairs in $\mathbb{R}^2$, we can compute a minimum-width pair of strips $(\sigma_r,\sigma_b)$ such that $\sigma_r$ is horizontal and $\sigma_r$ captures $\sigma_b$ in $O(n^2)$ time.
\end{lemma}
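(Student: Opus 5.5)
The plan is to assemble the preceding lemmas into a correctness argument plus a running-time tally. Let $(\sigma_r,\sigma_b)$ be an optimal pair with $\sigma_r$ horizontal and $\sigma_r$ capturing $\sigma_b$. Among all such pairs with this $\sigma_r$, choose $\sigma_b$ of minimum width.

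\textbf{Step 1: reduce to the candidate set $\Sigma$.} By Lemma~\ref{lem:capturing_pts}, this minimal $\sigma_b$ satisfies the conclusion of that lemma. It is feasible by Lemma~\ref{lem:feasible}, so $\sigma_b\in\Sigma$.

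\textbf{Step 2: replace $\sigma_r$ by the best partner of $\sigma_b$.} By Lemma~\ref{lem:feasible}, $\sigma_r$ is a feasible horizontal strip and covers $P(S)\setminus\sigma_b$. Let $\sigma'$ be a minimum-width feasible horizontal strip covering $P(S)\setminus\sigma_b$; then $w(\sigma')\le w(\sigma_r)$. Again by Lemma~\ref{lem:feasible}, $(\sigma',\sigma_b)$ is feasible, since both strips are feasible and together they cover $P(S)$. Its width is at most that of $(\sigma_r,\sigma_b)$.

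\textbf{Step 3: handle the capture condition.} The point needing care is whether $\sigma'$ still captures $\sigma_b$. I would argue this is unnecessary. Every candidate pair $(\sigma',\sigma)$ with $\sigma\in\Sigma$ is a genuinely feasible pair, so the minimum over candidates is attained by a feasible solution. That minimum is at most the optimum over capturing pairs, by Steps 1 and 2. Hence the returned pair is at least as good as any capturing pair. If the statement is read strictly as requiring capture, one only needs an optimal pair; the algorithm's output is then a valid answer in the sense needed for combining with the non-capturing case. I would phrase the lemma's guarantee as returning a feasible pair of width at most the best capturing pair. This is the main subtlety I expect, and I would resolve it by this dominance argument rather than verifying capture for $\sigma'$.

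\textbf{Step 4: running time.} Computing $\Sigma$ takes $O(n^2)$ time by Lemma~\ref{lem:capturecompute}, using the $O(n^2)$ computation of all $U_i$ from Theorem~\ref{thm:one_strip}. The topmost and bottommost points of $P(S)\setminus\sigma$ for all $\sigma\in\Sigma$ also take $O(n^2)$ time, by Lemma~\ref{lem:extreme}. For horizontal strips, we build the structure of Lemma~\ref{lem:ds} with the roles of $x$- and $y$-coordinates swapped, in $O(n\log n)$ time. Each query takes $O(1)$ time and returns the minimum-width feasible horizontal strip containing those two extreme points. Containing both extremes is equivalent to covering all of $P(S)\setminus\sigma$, so the query returns exactly $\sigma'$. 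Over $|\Sigma|=O(n^2)$ candidates, the total cost is $O(n^2)$, and taking the minimum-width pair adds linear time in the number of candidates.
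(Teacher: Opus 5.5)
Your proposal follows the paper's argument. You get $\sigma_b\in\Sigma$ from Lemma~\ref{lem:capturing_pts}, pair each $\sigma\in\Sigma$ with the minimum-width feasible horizontal strip through the extremes of $P(S)\setminus\sigma$ (Lemmas~\ref{lem:capturecompute}, \ref{lem:extreme} and the horizontal analogue of Lemma~\ref{lem:ds}) in $O(n^2)$ total, and your Step~3 rightly makes explicit that the returned pair need not itself be capturing. The one adjustment is in Step~1: as in the paper, choose $\sigma_b$ minimal among \emph{all} feasible partners of $\sigma_r$, and read the lemma as covering only the case where that minimal $\sigma_b$ is captured, leaving the other case to Lemma~\ref{lem:noncap}. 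That minimum exists by compactness and is the hypothesis Lemma~\ref{lem:capturing_pts} is proved under, whereas a minimum over capturing partners alone need not be attained, because a vacuously captured bounding line can slide onto a point outside $\sigma_r$ in the limit.
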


\subsection{When \texorpdfstring{$\sigma_r$}{σr} does not capture \texorpdfstring{$\sigma_b$}{σb}}\label{sec:2O1c2}
\begin{lemma}\label{lem:noncap}
  Suppose the points of $P(S)$ are sorted by nondecreasing $y$-coordinates.
  For a horizontal line $\ell$, we can find in $O(n)$ time a minimum-width feasible pair of strips $(\sigma,\sigma')$ such that $\ell$ is the lower bounding line of $\sigma$ and $\sigma$ does not capture $\sigma'$, if such a pair exists.
\end{lemma}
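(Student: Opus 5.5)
The plan is to fix $\ell$ and parametrize $\sigma$ by its upper bounding line $\ell'$. Only $O(n)$ positions of $\ell'$ matter, namely the horizontal lines through the points of $P(S)$ lying on or above $\ell$. These are available in sorted order from the given sorting, so we can sweep $\ell'$ upward.

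For each position, let $X=P(S)\setminus\sigma$. Every point of $X$ must be blue, so $\sigma'$ covers $X$. The first step is a structural claim about the non-capturing case. If $(\sigma,\sigma')$ is optimal and $\sigma$ does not capture $\sigma'$, then each bounding line of $\sigma'$ carries a point of $X$. Using the same swap/translate/rotate argument as in Lemma~\ref{lem:capturing_pts}, we then want $\sigma'$ to be the minimum-width strip $\sigma^\circ(X)$ of $X$, and in particular not wider because of pairs lying entirely inside $\sigma$. Concretely, if $\sigma'$ were wider than needed for $X$, its defining points would all be points whose removal is blocked only by twins inside $\sigma$. The swapping argument should then either shrink $\sigma'$ or exhibit a captured bounding line. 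Given the claim, the candidate for each $\ell'$ is the pair $(\sigma,\sigma^\circ(X))$. It is valid iff $\sigma$ is feasible and $\sigma^\circ(X)$ is feasible, which by Lemma~\ref{lem:feasible} is exactly what we need, since $\sigma\cup\sigma^\circ(X)\supseteq P(S)$.

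Next come the monotone quantities along the sweep. As $\ell'$ moves up, $w(\sigma)$ is nondecreasing and $X$ loses points, so $w^\circ(X)$ is nonincreasing. Feasibility of $\sigma$ can be tested in $O(1)$ amortized time by maintaining, for each pair, how many of its points lie in $\sigma$. The objective $\max\{w(\sigma),w^\circ(X)\}$ is then minimized near the crossing of a nondecreasing and a nonincreasing sequence. Hence it suffices to locate the crossing index and evaluate $O(1)$ candidates there, plus the first feasible index. Feasibility of $\sigma^\circ(X)$ is checked only at those candidates, in $O(n)$ time each.

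The main obstacle is evaluating $w^\circ(X)$ quickly enough to find the crossing in $O(n)$ total time. The set $X$ splits into a fixed part $X_{\mathrm{low}}$, the points below $\ell$, and a part $X_{\mathrm{up}}$, the points above $\ell'$. Going down the sweep, $X_{\mathrm{up}}$ grows by insertions in sorted $y$-order. Its convex hull for every suffix can therefore be built incrementally with Andrew's monotone-chain scan in $O(n)$ total time, recording the changes. $\conv{X_{\mathrm{low}}}$ is computed once in $O(n)$ time. The two hulls are separated by the slab between $\ell$ and $\ell'$, so $\conv{X}$ is obtained from them via two bridges.

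I would first try to show that the bridges and the antipodal pairs realizing the width move monotonically as $\ell'$ moves. That would allow a rotating-calipers two-pointer scan across all positions in $O(n)$ amortized time, yielding every $w^\circ(X)$. If this monotonicity fails, the fallback is an exponential/binary search for the crossing. It would use only $O(\log n)$ width evaluations of $O(n)$ each, and I would then try to amortize this against the sorted order to recover $O(n)$.
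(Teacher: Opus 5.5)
Your sweep over $\ell'$ and the counter-based feasibility test for $\sigma$ match the paper. The reduction you build on them, however, rests on the claim that in the non-capturing case $\sigma'=\sigma^\circ(X)$, and this claim is false. Non-capture only says that each bounding line of $\sigma'$ carries a point of $X=P(S)\setminus\sigma$. So $\sigma'$ is the supported strip of $\conv{X}$ \emph{in its own orientation}, and that orientation is constrained by feasibility of $\sigma'$: every pair with both points inside $\sigma$ must still put a point into $\sigma'$. The swap/translate/rotate argument of Lemma~\ref{lem:capturing_pts} cannot remove this constraint. Such a pair has no point in $X$ to swap with, and the bounding lines of $\sigma'$ may each carry several points of $X$.

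Concretely, take $\ell\colon y=0$ and the pairs $\{(0,-5),(0,0)\}$, $\{(10,-5),(10,0)\}$, $\{(0,6),(0,1)\}$, $\{(10,6),(10,1)\}$, $\{(20,\tfrac12),(-10,\tfrac12)\}$. Feasibility of $\sigma$ forces its upper line to $y\ge 1$. For $1\le y<6$, $X$ is the corner set of $[0,10]\times[-5,6]$, whose unique minimum-width strip $\{0\le x\le 10\}$ misses both points of the last pair. For $y\ge 6$, $X=\{(0,-5),(10,-5)\}$ and $\sigma^\circ(X)$ is the line $y=-5$. Both are infeasible, so your procedure reports that no pair exists. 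Yet $(\{0\le y\le 1\},\{-5\le y\le 6\})$ is a feasible non-capturing pair of width $11$, and no non-capturing pair is narrower.

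The correct candidate for a given $\ell'$ is the minimum-width \emph{feasible} strip among all supported strips of $\conv{X}$, over all orientations. The paper obtains it by walking a dual vertical segment whose endpoints follow the upper and lower envelopes of the dual lines of $X$, updating per-pair intersection counters at vertices of $\arr$.

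This also removes the basis of your $O(n)$ strategy. The correct per-$\ell'$ value is not monotone in $\ell'$ (in the example it jumps from $11$ to $\infty$), so there is no crossing to locate and every position of $\ell'$ must be examined. Even in your own model, feasibility of $\sigma^\circ(X)$ is not monotone. The nearest feasible index on either side of the crossing may therefore be far away, with each test costing $O(n)$. Beyond that, you leave the linear bound as a hope (monotone bridges and calipers, or an $O(n\log n)$ fallback).

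The ingredient you are missing is the paper's incremental tracing. Each step of $\ell'$ deletes the bottommost point of $Q_a$ (the points above $\sigma$), so $\conv{Q_a}$ and its two common tangents with $\conv{Q_b}$ can be maintained in amortized $O(1)$ time. The family of supported strips changes only for orientations between the old and new orientation of a changed common tangent, so the dual walk resumes there from a stored state. Because these traced orientation intervals are disjoint over the whole sweep, each point of $P(S)$ is crossed by a bounding line only $O(1)$ times in total, which gives $O(n)$ events overall.
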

\begin{proof}
 If $\sigma$ does not capture $\sigma'$, then each bounding line of $\sigma'$ contains a point of $P(S)\setminus \sigma$.
  Hence both bounding lines of $\sigma'$ are tangent to $\conv{P(S)\setminus \sigma}$.

  For a compact set $X\subset\mathbb{R}^2$, a strip is \emph{supported by} $X$ 
  if it covers $X$ and both its bounding lines are tangent to $X$.
  For any orientation $\theta$, the supported strip is exactly the minimum-width strip covering $X$ in orientation $\theta$.
  Therefore, $\sigma'$ is the supported strip of $\conv{P(S)\setminus \sigma}$ in its orientation.

  We compute $(\sigma,\sigma')$ by sweeping a feasible horizontal strip $\sigma_h$ whose lower bounding line is $\ell$.
  Let $\ell'$ be its upper bounding line, moving upward.
  Let $Q_a$ (resp.\ $Q_b$) be the set of points of $P(S)$ above (resp.\ below) $\sigma_h$.
  While $\ell'$ moves, we maintain $\conv{Q_a\cup Q_b}$.
  Whenever $\ell'$ hits a point of $P(S)$, we update $\conv{Q_a\cup Q_b}$ and recompute the minimum-width feasible supported strip $\sigma_s$ of $\conv{Q_a\cup Q_b}$; if it exists, we report $(\sigma_h,\sigma_s)$.
  The minimum-width reported pair is the desired $(\sigma,\sigma')$.

  \smallskip\noindent
\textbf{Initialization.}
  First compute the minimum-width feasible horizontal strip $\sigma_h$ with lower bounding line $\ell$ in $O(n)$ time by sweeping upward from $\ell$ using the sorted order of $P(S)$; if none exists, report failure.
  Next compute $\conv{Q_a}$ and $\conv{Q_b}$ in $O(n)$ time, and obtain $\conv{Q_a\cup Q_b}$ via the left and right common tangents in $O(\log n)$ time~\cite{Overmars1981}.

  \smallskip\noindent
  \textbf{Finding a minimum-width feasible supported strip.}
  Let $Q^*$ be the set of dual lines of points in $Q_a\cup Q_b$.
  A strip supported by $\conv{Q_a\cup Q_b}$ corresponds in the dual to a vertical segment whose upper (resp.\ lower) endpoint lies on the upper (resp.\ lower) envelope of $Q^*$.
  Thus all supported strips are obtained by tracing such a segment while its endpoints move along the two envelopes.
  During tracing, we maintain for each pair in $S$ how many of its two dual lines are intersected by the segment, and the number of pairs intersected at least once; this tests feasibility.
  The maintained values change only when an endpoint passes a vertex of $\arr$.
  Since each envelope has complexity $O(n)$ and each line of $L(S^*)$ intersects an $x$-monotone concave envelope at most twice, the total number of events on both envelopes is $O(n)$.
  Hence we can enumerate all feasible supported strips and select the minimum-width one $\sigma_s$ in $O(n)$ time.
  We also store the supported strip $\sigma_l$ parallel to the left common tangent (and symmetrically for the right one), together with the faces of $\arr$ containing the endpoints of its dual segment and the per-pair intersection counts obtained during tracing.

  \smallskip\noindent
  \textbf{Updates as $\ell'$ moves.}
  The next point hit by $\ell'$ is found in $O(1)$ time from the sorted order; it is deleted from $Q_a$.
  Because deletions always remove the bottommost point of $Q_a$, we update $\conv{Q_a}$ in amortized $O(1)$ time using~\cite{Wang2025}, and update the common tangents between $\conv{Q_a}$ and $\conv{Q_b}$ in amortized $O(1)$ time using~\cite{Chung2026}.

  If both common tangents remain unchanged, $\conv{Q_a\cup Q_b}$ induces the same supported strip 
  for all orientations, so no new candidate arises.
  Otherwise, either the left tangent rotates clockwise or the right tangent rotates counterclockwise; consider the left case.
  Let the left tangent rotate from $\ell_l$ to $\ell_l'$, with orientations $\theta_l>\theta_l'$.
  Only $\theta\in(\theta_l',\theta_l)$ may change the supported strip, 
  so we trace supported strips as $\theta$ decreases from $\theta_l$ to $\theta_l'$.
  Since $\sigma_{\theta_l}=\sigma_l$ and the dual-location/counters for $\sigma_l^*$ are stored, 
  we continue tracing in $\arr$ from that state.
  Each time an endpoint crosses a vertex of $\arr$, in the primal a bounding line becomes supported by a different point, i.e., 
  an event where a bounding line hits a point of $P(S)$.

  \smallskip\noindent
  \textbf{Event bound.}
  Over the whole sweep, each point of $P(S)$ participates in $O(1)$ such events.
  Indeed, during any traced orientation interval, the left bounding line stays tangent to $\conv{Q_b}$; 
  a point crosses it only at the unique orientation where the left tangent of $\conv{Q_b}$ passes through that point.
  Since traced orientation intervals from different updates are disjoint, each point crosses the left bounding line at most once.
  The right bounding line is tangent to $\conv{Q_a}$ until it passes the orientation of the right common tangent, 
  and afterwards to $\conv{Q_b}$; in each regime, the same argument bounds crossings by one.
  Thus the total number of traced events is $O(n)$, and the whole procedure runs in $O(n)$ time.
\end{proof}

Lemma~\ref{lem:noncap} yields an $O(n^2)$-time algorithm.
Sort $P(S)$ by $y$-coordinates.
For each point $p\in P(S)$, let $\ell$ be the horizontal line through $p$ and 
compute (if it exists) the minimum-width feasible pair $(\sigma,\sigma')$ 
such that $\ell$ is the lower bounding line of $\sigma$ and $\sigma$ does not capture $\sigma'$.
Selecting the minimum-width pair over all $p$ gives the optimal $(\sigma_r,\sigma_b)$ where $\sigma_r$ is horizontal and does not capture $\sigma_b$.
Together with Lemma~\ref{lem:2O1c1}, we obtain the following.

\begin{theorem}\label{thm:2O1}
  Given $n$ point pairs in $\mathbb{R}^2$,
  we can solve the $\mathsf{2O_1}$ problem in $O(n^2)$ time.
\end{theorem}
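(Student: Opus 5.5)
The plan is to split the optimal solution into two cases: either $\sigma_r$ captures $\sigma_b$, or it does not. Each case is already handled by one of the two preceding lemmas, and the algorithm returns the narrower of the two resulting pairs. Fix an optimal solution $(\sigma_r,\sigma_b)$ with $\sigma_r$ horizontal, chosen as in the setup of this section so that no $\sigma_b'$ with smaller width works together with $\sigma_r$. Exactly one of the two cases holds for this pair.

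\textbf{Captured case.} This case is covered directly by Lemma~\ref{lem:2O1c1}. That lemma returns in $O(n^2)$ time a minimum-width feasible pair over all pairs in which the horizontal red strip captures the blue strip. The only point to verify is that the minimality assumption on $\sigma_b$ is exactly what Lemma~\ref{lem:capturing_pts} uses. Hence an optimal pair of this type has $\sigma_b\in\Sigma$, and its red partner can be replaced by the minimum-width feasible horizontal strip covering $P(S)\setminus\sigma_b$, which does not increase the width. This gives the optimum whenever the optimal solution is of the captured type.

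\textbf{Non-captured case.} Here we use Lemma~\ref{lem:noncap}. We may assume the lower bounding line of $\sigma_r$ passes through a point of $P(S)$: we can translate the lower line upward until it hits a point, which preserves coverage and does not increase the width. So it suffices to range over the $2n$ horizontal lines through points of $P(S)$. We sort $P(S)$ by $y$-coordinate once, in $O(n\log n)$ time. For each such line $\ell$ we invoke Lemma~\ref{lem:noncap} in $O(n)$ time, for $O(n^2)$ total, and keep the best reported pair. The subtle point, and the main obstacle, is that translating the lower line must not turn a non-capturing pair into a capturing one in a way that loses the optimum. I would handle this by noting that any pair reported by either procedure is feasible, and that the optimum falls into one of the two families. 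It is then harmless if the normalized pair migrates into the captured family: its width is still at least the value found by Lemma~\ref{lem:2O1c1}. So taking the minimum over both procedures still attains the optimum.

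Finally, both procedures only report feasible pairs, and by Lemma~\ref{lem:feasible} each such pair yields a valid assignment. The output is therefore optimal, and the total running time is $O(n^2)+O(n\log n)+2n\cdot O(n)=O(n^2)$.
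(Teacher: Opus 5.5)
Your proposal is correct and follows the paper's proof: Lemma~\ref{lem:2O1c1} handles the captured case, and one sort by $y$-coordinate followed by $2n$ calls to Lemma~\ref{lem:noncap} (one per horizontal line through a point of $P(S)$) handles the non-captured case, for $O(n^2)$ total. Your worry about migration between the two families is actually moot: translating the lower line of $\sigma_r$ up to the first point of $P(S)$ leaves $\sigma_r\cap P(S)$, and hence the capture relation with $\sigma_b$, unchanged; your fallback argument of taking the minimum over both families is nonetheless valid.
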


\section{Two line-centers, unrestricted}\label{sec:2U}
We study the unrestricted two line-center problem, denoted by $\mathsf{2U}$, using the capture relation of Section~\ref{sec:2O1}.
Let $(\sigma_r,\sigma_b)$ be an optimal solution.
As in Section~\ref{sec:2O1}, we assume minimality: there is no feasible $(\sigma_r',\sigma_b)$ with $w(\sigma_r')<w(\sigma_r)$, nor feasible $(\sigma_r,\sigma_b')$ with $w(\sigma_b')<w(\sigma_b)$.

\subsection{When \texorpdfstring{$\sigma_r$}{σr} and \texorpdfstring{$\sigma_b$}{σb} capture each other}
By the same reasoning as in Lemma~\ref{lem:capturing_pts}, both strips satisfy the conditions of Lemma~\ref{lem:capturing_pts}.
By Lemma~\ref{lem:capturecompute}, we can compute in $O(n^2)$ time the set $\Sigma$ of feasible strips satisfying these conditions.
By Lemma~\ref{lem:feasible}, it suffices to find $\sigma,\sigma'\in\Sigma$ with $P(S)\subset \sigma\cup\sigma'$.
Sort $\Sigma$ by nondecreasing width in $O(n^2\log n)$ time, and let $\sigma_1,\ldots,\sigma_{\lvert\Sigma\rvert}$ be the sorted list.
For each $j=1,\ldots,\lvert\Sigma\rvert$, we test whether there exists $i\leq j$ with $P(S)\subset \sigma_i\cup\sigma_j$; if so, we output $(\sigma_i,\sigma_j)$, which is optimal by the ordering.

Let $Q(j):=P(S)\setminus \sigma_j$, and let $Q^*(j)$ be the set of dual lines of points in $Q(j)$.
Then $P(S)\subset \sigma_i\cup\sigma_j$ iff $Q(j)\subset \sigma_i$, which holds iff $\sigma_i^*$ intersects every line in $Q^*(j)$ (Observation~\ref{obs:dual_strip}).
Equivalently, the upper (resp.\ lower) endpoint of $\sigma_i^*$ lies on/above the upper envelope (resp.\ on/below the lower envelope) of $Q^*(j)$.

We maintain a dynamic data structure on $\{\sigma_1,\ldots,\sigma_{j}\}$ that,
given $Q(j)$, report a stored strip covering $Q(j)$, if any; otherwise insert $\sigma_j$.
We use a 2D range tree with fractional cascading~\cite{vanKreveld2008,Chazelle1986} and partial rebuilding~\cite{Overmars1983}, supporting range-maximum queries in $O(\log N)$ time and insertions in amortized $O(\log^2 N)$ time for $N$ weighted points.
For each $\sigma_i\in\Sigma$, let $x_i$ be the $x$-coordinate of $\sigma_i^*$, let $y_i$ be the level of the lower endpoint of $\sigma_i^*$ in $\arr$, and let $\lambda_i$ be the level of the upper endpoint.
Store $\sigma_i$ as the point $(x_i,y_i)$ with weight $\lambda_i$.

For a fixed $j$, compute the upper and lower envelopes $\mathsf{U}_j$ and $\mathsf{L}_j$ of $Q^*(j)$ in $O(n\log n)$ time.
Tracing them in $\arr$ yields $O(n)$ maximal $x$-intervals on which its level is constant; merging the two partitions gives a set $\mathcal I_j$ of $O(n)$ intervals, each annotated with an upper level $u$ (for $\mathsf{U}_j$) and a lower level $l$ (for $\mathsf{L}_j$).
For any interval $[x_1,x_2]\in\mathcal I_j$, a stored strip $\sigma_i$ with $x_i\in[x_1,x_2]$ covers $Q(j)$ iff
$y_i\le l$ and $\lambda_i\ge u$.
Thus, we issue a single range query on
$[x_1,x_2]\times(-\infty,l]$; if the reported maximum weight is at least $u$, we have found a desired $\sigma_i$.
Checking all intervals in $\mathcal I_j$ takes $O(n\log\lvert\Sigma\rvert)=O(n\log n)$ time for this $j$.

We insert at most $\lvert\Sigma\rvert=O(n^2)$ points, for total amortized insertion time $O(n^2\log^2 n)$.
For each $j=1,\ldots,\lvert\Sigma\rvert$, we spend $O(n\log n)$ time on envelope construction and queries, hence $O(n^3\log n)$ time overall, which dominates.

\begin{lemma}\label{lem:2Uc1}
  Given $n$ point pairs in $\mathbb{R}^2$, we can compute a minimum-width pair of strips $(\sigma_r,\sigma_b)$ such that $\sigma_r$ and $\sigma_b$ capture each other in $O(n^3\log n)$ time.
\end{lemma}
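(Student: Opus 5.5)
The proof will assemble the algorithm described just before the statement and verify its correctness and running time, in three steps.

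\textbf{Step 1: reduce to a search over $\Sigma\times\Sigma$.} Let $(\sigma_r,\sigma_b)$ be an optimal pair that is minimal in the sense fixed at the start of Section~\ref{sec:2U}, with each strip capturing the other. Since $\sigma_b$ captures $\sigma_r$ and $\sigma_r$ is minimal given $\sigma_b$, the swapping, translation and rotation argument of Lemma~\ref{lem:capturing_pts} applies verbatim with the roles of the colors exchanged. The same holds symmetrically for $\sigma_b$. Hence both strips lie in $\Sigma$, which Lemma~\ref{lem:capturecompute} computes in $O(n^2)$ time with $|\Sigma|=O(n^2)$. By Lemma~\ref{lem:feasible}, a pair $(\sigma,\sigma')\in\Sigma^2$ is feasible exactly when $P(S)\subset\sigma\cup\sigma'$. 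So the optimum of this case equals
\[
\min\bigl\{\max\{w(\sigma),w(\sigma')\} : \sigma,\sigma'\in\Sigma,\ P(S)\subset\sigma\cup\sigma'\bigr\}.
\]

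\textbf{Step 2: correctness of the sorted scan.} Sort $\Sigma$ by width. The first $j$ at which some $i\le j$ satisfies $P(S)\subset\sigma_i\cup\sigma_j$ gives the minimum objective value $w(\sigma_j)$: any feasible pair has objective equal to the width of its wider member, whose index is at least $j$. For the query I will argue, via Observation~\ref{obs:dual_strip}, that $Q(j)\subset\sigma_i$ iff the dual segment $\sigma_i^*$ reaches the upper envelope $\mathsf U_j$ and the lower envelope $\mathsf L_j$ of $Q^*(j)$ at $x_i$.

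Encoding the endpoints by their levels in $\arr$ is valid because each endpoint is either a vertex or on an edge of $\arr$. On each interval of $\mathcal I_j$, the envelopes stay within fixed levels $u$ and $l$. The comparisons "lower endpoint on or below $\mathsf L_j$" and "upper endpoint on or above $\mathsf U_j$" then become $y_i\le l$ and $\lambda_i\ge u$. One caveat is that endpoints which coincide with a vertex on an interval boundary need level conventions that are consistent. I would handle this by treating interval endpoints as closed and comparing exact levels, or by a symbolic perturbation. This level-encoding step is the main point to check carefully.

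\textbf{Step 3: running time.} Sorting $\Sigma$ costs $O(n^2\log n)$. The range tree with fractional cascading and partial rebuilding supports $O(n^2)$ insertions in total $O(n^2\log^2 n)$ time. For each of the $O(n^2)$ indices $j$:
\begin{itemize}
\item computing $\mathsf U_j$ and $\mathsf L_j$ takes $O(n\log n)$ time;
\item tracing them through $\arr$ takes $O(n)$ time, since each envelope has $O(n)$ edges and we can walk along the stored arrangement;
\item the $O(n)$ range-maximum queries take $O(\log n)$ each.
\end{itemize}
This gives $O(n\log n)$ per $j$ and $O(n^3\log n)$ in total, which dominates every other term.
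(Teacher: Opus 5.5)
Your proposal is correct and follows the paper's proof essentially step for step. Both strips lie in $\Sigma$ by Lemmas~\ref{lem:capturing_pts} and~\ref{lem:capturecompute}, Lemma~\ref{lem:feasible} reduces the problem to testing $P(S)\subset\sigma_i\cup\sigma_j$, you scan $\Sigma$ in width order, and you answer each test with $O(n)$ level-annotated range-maximum queries, for $O(n^3\log n)$ total. The level caveat you flag is real, and the paper glosses over it: with level meaning ``lines strictly below'', an upper endpoint of $\sigma_i^*$ sitting on a vertex of $\mathsf U_j$ has a level strictly smaller than the level $u$ of either adjacent interval, so closed intervals alone do not fix it. The clean fix is to keep $y_i$ as the number of lines strictly below the lower endpoint but set $\lambda_i$ to the number of lines through or below the upper endpoint minus one; then the tests $y_i\le l$ and $\lambda_i\ge u$ are exact on closed intervals.
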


\subsection{When \texorpdfstring{$\sigma_r$}{σr} or \texorpdfstring{$\sigma_b$}{σb} does not capture the other}
Without loss of generality, assume that $\sigma_r$ does not capture $\sigma_b$.
By the minimality of $\sigma_r$, some bounding line of $\sigma_r$ contains two points of $P(S)$.
This observation, together with Lemma~\ref{lem:noncap}, yields the following algorithm.

For every choice of two points $p,q\in P(S)$, let $\ell$ be the line through $p$ and $q$ and let $\theta$ be its orientation.
Sort the points in $P(S)$ by their order in the direction orthogonal to $\theta$ in $O(n\log n)$ time.
Then, by Lemma~\ref{lem:noncap}, we can compute in $O(n)$ time a minimum-width feasible pair of strips $(\sigma,\sigma')$, 
if any, such that $\ell$ is a bounding line of $\sigma$ and $\sigma$ does not capture $\sigma'$.
Taking the best solution over all $\binom{n}{2}$ choices of $(p,q)$ gives $(\sigma_r,\sigma_b)$.
The total running time is $O(n^3\log n)$.
Together with Lemma~\ref{lem:2Uc1}, we obtain the following.
\begin{theorem}\label{thm:2U}
    Given $n$ point pairs in $\mathbb{R}^2$, we can solve the \textsf{2U} problem in $O(n^3\log n)$ time.
\end{theorem}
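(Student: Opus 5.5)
The theorem combines the two cases treated in this section. Fix an optimal solution $(\sigma_r,\sigma_b)$ and normalize it by the minimality assumption at the start of Section~\ref{sec:2U}: neither strip can be replaced by a strictly narrower one while keeping feasibility. Feasibility is preserved under such shrinking by Lemma~\ref{lem:feasible}, and the set of optimal solutions is compact, so a minimal optimal pair exists.

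The case split is exhaustive. Either $\sigma_r$ and $\sigma_b$ capture each other, or at least one does not capture the other; by symmetry of colors we may assume $\sigma_r$ does not capture $\sigma_b$. The algorithm runs both procedures and returns the narrower of the two outputs. It is correct provided each procedure returns a feasible pair, and the procedure for the case the optimum falls into returns a pair of width at most the optimum.

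\textbf{Mutual capture.} By the swap-and-shrink argument of Lemma~\ref{lem:capturing_pts}, each of $\sigma_r$ and $\sigma_b$ satisfies its conclusion. That argument uses only the minimality of the captured strip and that the other strip covers the capturing line's points, so it applies with the roles of red and blue exchanged. Hence both strips lie in $\Sigma$, and Lemma~\ref{lem:2Uc1} finds a pair in $\Sigma\times\Sigma$ of width at most the optimum in $O(n^3\log n)$ time.

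\textbf{No capture.} Here I need the claim that some bounding line of $\sigma_r$ contains two points of $P(S)$. I expect this to be the main obstacle.

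Since $\sigma_r$ does not capture $\sigma_b$, each bounding line of $\sigma_b$ contains a point outside $\sigma_r$. Those points, and all of $P(S)\setminus\sigma_r$, must be blue, so $\sigma_b$ is pinned. Now consider $\sigma_r$. Every point on a bounding line of $\sigma_r$ that is not in $\sigma_b$ must be red. The plan is the following argument.
\begin{itemize}
\item If one bounding line of $\sigma_r$ contained no point lying outside $\sigma_b$, then every point on that line could be recolored blue, swapping with its twin as in Lemma~\ref{lem:capturing_pts}. That line could then be translated inward, contradicting minimality. This forces each bounding line of $\sigma_r$ to carry a point forced red.
\item If each bounding line of $\sigma_r$ carried exactly one point, then rotating $\sigma_r$ about these two points, the standard argument of~\cite{Houle1988} used in Lemma~\ref{lem:capturing_pts}, would decrease its width. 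Small rotations keep all other points strictly inside, so feasibility is preserved.
\end{itemize}
Degenerate cases, such as a line with no forced red point, reduce to the translation argument above.

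Given this claim, enumerate all $O(n^2)$ lines through two points of $P(S)$, taking both sides as the lower bounding line after reflecting. For each line, sort $P(S)$ in the orthogonal direction in $O(n\log n)$ time and apply Lemma~\ref{lem:noncap}. This costs $O(n^3\log n)$ in total and returns a feasible pair of width at most the optimum. Adding the two cases gives the stated bound.
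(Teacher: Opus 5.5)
Your overall proof is the paper's: a minimal optimal pair, $\Sigma$ with Lemma~\ref{lem:2Uc1} for mutual capture, and for the other case the enumeration of the $O(n^2)$ two-point lines, each followed by a sort and Lemma~\ref{lem:noncap}, giving $O(n^3\log n)$ in total.

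\textbf{The first bullet fails.} In the non-capture case, $\sigma_r$ does not capture $\sigma_b$, but $\sigma_b$ may still capture $\sigma_r$. So a bounding line $\ell$ of $\sigma_r$ can have all its points inside $\sigma_b$ and still be unable to move. Recoloring a point $p\in\ell$ blue forces its twin $p'$ red. That requires $p'\in\sigma_r\setminus\ell$. If instead $p'\notin\sigma_r$, then $\sigma_r$ must keep $p$ to stay feasible, so $\ell$ cannot be translated inward. This is exactly why Lemma~\ref{lem:capturing_pts} only concludes that some point on the captured line has its twin outside the interior, and reaches no contradiction. Hence ``each bounding line of $\sigma_r$ carries a forced-red point'' does not follow. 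Your sentence on degenerate cases repeats the same error.

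\textbf{The error is harmless because you never need it.} Fix $\sigma_b$. By Lemma~\ref{lem:feasible}, $(\sigma,\sigma_b)$ is feasible iff $\sigma$ is feasible and $P(S)\setminus\sigma_b\subseteq\sigma$. This condition depends only on $\sigma\cap P(S)$ and is monotone in it. Now suppose no bounding line of $\sigma_r$ contains two points.
\begin{itemize}
\item If a bounding line contains no point of $P(S)$, translating it inward leaves $\sigma_r\cap P(S)$ unchanged and reduces the width.
\item If each bounding line contains exactly one point, your rotation from the second bullet also leaves $\sigma_r\cap P(S)$ unchanged while strictly decreasing the width.
\end{itemize}
Either case contradicts the minimality of $\sigma_r$, so some bounding line contains two points of $P(S)$. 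This argument needs no recoloring and does not use non-capture. Non-capture is used only so that $\sigma_b$ is the supported strip of $\conv{P(S)\setminus\sigma_r}$ in Lemma~\ref{lem:noncap}. It is presumably what the paper's one-line appeal to minimality means. Replace your first bullet with it and the proof is complete.
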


\end{document}